\documentclass[11pt]{article}

\setlength{\textwidth}{6.5in} \setlength{\topmargin}{0.0in}
\setlength{\headheight}{0in} \setlength{\headsep}{0.0in}
\setlength{\textheight}{9in} \setlength{\oddsidemargin}{0in}
\setlength{\evensidemargin}{0in}

\newcommand{\etal}{et al.\ }

\usepackage{amsthm, amsmath, amssymb}
\usepackage{graphicx}
\usepackage{algorithm}
\usepackage{algorithmic}
\usepackage{xspace}
\usepackage{enumitem}
\usepackage{tabularx}
\usepackage{xcolor}
\usepackage{graphicx} 
\usepackage{tikz}
\usepackage{wrapfig}

\DeclareMathOperator{\E}{\mathbb E}

\newcommand{\R}{\mathbb{R}}

\newtheorem{theorem}{Theorem}[section]
\newtheorem{lemma}[theorem]{Lemma}

\newtheorem{definition}[theorem]{Definition}
\newtheorem{claim}[theorem]{Claim}

\newcommand{\set}[1]{\left\{#1\right\}}

\newcommand{\floor}[1]{\left\lfloor#1\right\rfloor}
\newcommand{\ceil}[1]{\left\lceil#1\right\rceil}

\newcommand{\calA}{{\mathcal A}}

\newcommand{\calR}{{\mathcal R}}

\newcommand{\eps}{\epsilon}
\newcommand{\bfTheta}{{\mathbf\Theta}}

\newcommand{\idle}{{\mathsf{idle}}}
\newcommand{\emp}{{\mathsf{empty}}}
\newcommand{\sfd}{{\mathsf{d}}}
\newcommand{\sfb}{{\mathsf{b}}}

\newcommand{\lpinterval}{\mbox{$\mathsf{LP}_\mathsf{interval}$}\xspace}
\newcommand{\lpchain}{\mbox{$\mathsf{LP}_\mathsf{chain}$}\xspace}
\newcommand{\lpdual}{\mbox{$\mathsf{LP}_\mathsf{chain-d}$}\xspace}

\begin{document}

\sloppy

\title{Better Unrelated Machine Scheduling for Weighted Completion Time via Random Offsets from Non-Uniform Distributions}

\author{Sungjin Im\thanks{ Electrical Engineering and Computer Science, University of California, 5200 N. Lake Road, Merced CA 95344. {\tt sim3@ucmerced.edu}. Supported in part by NSF grant CCF-1409130.}  \and Shi Li \thanks{Department of Computer Science and Engineering, University at Buffalo, 1 White Road, Buffalo, NY 14260.  {\tt shil@buffalo.edu}. Supported in part by NSF grant CCF-1566356.}} 

\date{}
\maketitle
\thispagestyle{empty}

\begin{abstract}

In this paper we consider the classic scheduling problem of minimizing total weighted completion time on unrelated machines when jobs have release times, i.e, $R | r_{ij} | \sum_j w_j C_j$ using  the three-field notation.  For this problem, a 2-approximation is known based on a novel convex programming  (J. ACM 2001 by Skutella). It has been a long standing open problem if one can improve upon this 2-approximation (Open Problem 8 in J. of Sched. 1999 by Schuurman and Woeginger). We answer this question in the affirmative by giving a 1.8786-approximation. We achieve this via a surprisingly simple linear programming, but a novel rounding algorithm and analysis. A key ingredient of our algorithm is the use of random offsets sampled from non-uniform distributions.

We also consider the preemptive version of the problem, i.e, $R | r_{ij},pmtn | \sum_j w_j C_j$. We again use the idea of sampling offsets from non-uniform distributions to give the first better than 2-approximation for this problem. This improvement also requires use of a configuration LP with variables for each job's complete schedules along with more careful analysis. For both non-preemptive and preemptive versions, we break the approximation barrier of 2 for the first time. 
\end{abstract}

\clearpage
\setcounter{page}{1}

	\section{Introduction}

Modern computing facilities serve a large number of jobs with different characteristics. To cope with this challenge, they are equipped with increasingly heterogeneous machines that are clustered and connected in networks, so that 
each job can be scheduled on a more suitable machine \cite{sahoo2004failure,mukherjee2009spatio,beloglazov2010energy}. Further, the large number of machines of different generations are deployed over a long period of time, increasing the heterogeneity. 
The scheduling decision must factor in the heterogeneity and communication overhead.

Unrelated machine scheduling is a widely studied classic model that captures various scenarios including the above. There is a set $J$ of jobs to be scheduled on a set $M$ of unrelated machines. Each job $j \in J$ can have an arbitrary processing time/size $p_{i,j}$ depending on the machine $i$ it gets processed; if $p_{i,j} = \infty$, then job $j$ cannot be scheduled on machine $i$. Furthermore, due to the communication delay, job $j$ is available for service only from time $r_{i,j}$, which can be also arbitrary depending on the job $j$ and the machine $i$ the job gets assigned to. The parameter $r_{i,j}$ is often called as job $j$'s arrival/release time.\footnote{For simplicity,  we will mostly assume that job $j$'s release time $r_{i,j}$ is the same for all machines. This will justify using a simpler notation $r_j$ in place of $r_{i,j}$. Like most of previous works, extending our result to release dates with dependency on machines is straightforward.}
Another parameter $w_j$ is used to capture job $j$'s importance. 

Minimizing total (weighted) completion time is one of the most  popular scheduling objectives that has been extensively studied, even dating back to 50's \cite{smith1956various}. 
The scheduler must assign each job $j$ to a machine $i$ and complete it. 
 We consider two settings, preemptive and non-preemptive schedules. In the non-preemptive setting, each job must be completed without interruption once it starts getting processed. On the other hand, in the preemptive setting, each job's processing  can be interrupted to process other jobs and be resumed later. In both cases, job $j$'s completion time is, if $j$ is assigned to machine $i$,  defined as the first time when the job gets processed for $p_{i,j}$ units of time. Then, the objective is to minimize $\sum_{j \in J} w_j C_j$. These two non-preemptive and preemptive versions can be described as $R|r_j|\sum_{j}w_jC_j$ and $R|r_j,pmtn|\sum_{j}w_jC_j$ respectively, using the popular three-field notation in scheduling literature. Both versions of the problem are strongly NP-hard even in the single machine setting \cite{lenstra1977complexity}, and are APX-hard even when all jobs are available for schedule at time 0 \cite{hoogeveen2001non}, in which case preemption does not help. 
 
For the non-preemptive case, Skutella gave a 2-approximation based on a novel convex programming \cite{Skutella01}, which improved upon the $(2+\eps)$-approximation based on linear programming \cite{schulz2002scheduling}. It has been an outstanding open problem if there exists a better than 2-approximation \cite{Skutella01,schuurman1999polynomial,schulz2002scheduling,kumar2008minimum,sviridenko2013approximating}. In particular, it is listed in \cite{schuurman1999polynomial} as one of the top 10 opens problems in the field of approximate scheduling algorithms; see the Open Problem 8. When jobs have no arrival times, i.e. $r_{i,j} = 0$ for all $i,j$, very recently Bansal \etal \cite{BansalSS15} gave a better than 1.5-approximation in a breakthrough result, improving upon the previous best 1.5-approximations due to Skutella \cite{Skutella01} and Sethuraman and Squillante \cite{sethuraman1999optimal}. In fact, the Open Problem 8 consists of two parts depending on whether jobs have release times or not. 
Bansal \etal \cite{BansalSS15} solved the first part of Open Problem 8, and the second part still remained open.

\subsection{Our Results}

In this paper, we answer the second part of the open problem in the affirmative by giving a better than 2-approximation. 

\begin{theorem}[Section~\ref{sec:non-preemptive}]
	\label{thm:non-preemptive}
	For a constant $\alpha < 1.8786$, there exists an $\alpha$-approximation for $R|r_j|\sum_{j}w_jC_j$.
\end{theorem}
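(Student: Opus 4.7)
My plan is to follow the LP-rounding framework underlying Skutella's 2-approximation, but to replace the classical uniform random offset step with a non-uniform version. The starting point is a simple interval-based LP relaxation (\lpinterval) on geometric time intervals: variables $y_{i,j,s}$ record the fraction of job $j$'s processing done on machine $i$ during interval $s$. Standard constraints enforce that each job is fully scheduled, that machine capacities inside each interval are respected, and that release dates are not violated. From any such solution one extracts fractional assignment probabilities $x_{i,j}$ and fractional completion times $C^{\LP}_{i,j}$, giving the usual lower bound $\sum_{j} w_j \sum_i x_{i,j} C^{\LP}_{i,j} \le \opt$.

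The rounding proceeds in two stages. First, independently assign each job $j$ to a machine $I(j)$ with probabilities $x_{I(j),j}$. Second, on each machine $i$, list-schedule the assigned jobs non-preemptively in increasing order of a pseudo-completion time $\tau_j$. In the classical rounding, $\tau_j$ is drawn as the $\alpha$-quantile of the LP density of $j$ on $I(j)$ for a single uniform $\alpha\in[0,1]$; the analysis then charges the completion time of $j$ by roughly $C^{\LP}_{I(j),j}$ (from its own quantile) plus $C^{\LP}_{I(j),j}$ (from jobs preceding it), which is exactly the factor $2$. The change is to draw $\alpha$ from a carefully chosen non-uniform density $f$ on $[0,1]$. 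Biasing $f$ toward smaller values pulls jobs earlier in the list order, so a target job $j$'s own pseudo-completion time grows by a small factor $\beta>1$, but the expected total processing of jobs preceding $j$ drops by a larger factor $\gamma<1$; the overall ratio becomes a function of $(\beta,\gamma)$ that we optimize to get below $1.8786$.

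The analysis fixes a target job $j$ assigned to machine $i$ and decomposes its expected completion time into three pieces: the release-date wait, its own processing time $p_{i,j}$, and the total processing time of other jobs preceding it. The first two contribute a $C^{\LP}_{i,j}$-like term by standard arguments. For the third, for each other job $j'$ one writes the expected contribution as an integral over pairs of LP densities weighted by $f$, and sums over $j'$; the LP capacity constraint on each interval collapses the sum into a clean one-dimensional integral in $f$, reducing the whole problem to choosing a single $f$ that minimizes the worst-case ratio of (own inflation plus this integral) to $C^{\LP}_{i,j}$.

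The main obstacle I expect is the design of $f$. The release-date contribution is insensitive to $f$ and pushes the trade-off in an unhelpful direction, so it has to be balanced against the two $f$-dependent terms simultaneously. I would first look for $f$ inside a simple parametric family, for instance $f(\alpha)\propto \alpha^\theta$ or a low-complexity piecewise-linear density, compute the worst-case ratio on the three canonical extremal configurations — a single spread-out job, many identical-$C^{\LP}$ jobs with zero release date, and release-date-dominated instances — and then tune the parameter until the three regimes are balanced. Verifying that no other LP-feasible instance is worse than these extremal ones, presumably by a convexity or averaging argument over the space of LP solutions, is where I expect most of the technical work to live, and is the step I anticipate being delicate enough to determine the exact constant $1.8786$.
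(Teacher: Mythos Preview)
Your proposal has two substantive gaps that would block the improvement below $2$.

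First, the LP you describe is the wrong one. You define $y_{i,j,s}$ as ``the fraction of job $j$'s processing done on machine $i$ during interval $s$''; this is the Schulz--Skutella time-indexed LP, and it has integrality gap exactly $2$, so no rounding of it can beat $2$. The paper's \lpinterval\ is genuinely stronger: $y_{i,j,s}$ indicates that job $j$ \emph{starts} at time $s$ on machine $i$, with the non-preemptive packing constraint $\sum_{j,s\in[t-p_{i,j},t)} y_{i,j,s}\le 1$. This start-time formulation is what gives the LP enough strength; the paper in fact stresses that even with the stronger LP, the old rounding still only yields $2$, so both the new LP and a new analysis are needed.

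Second, and more fundamentally, your analysis treats the ``release-date wait'' (the idle time before $\tau_j$) as \emph{insensitive} to the offset distribution $f$, and proposes to trade off only the own-inflation term against the preceding-work term. That trade-off alone cannot break $2$: the bad instance in the paper (a huge job overlapping $\tau_j$) shows that the preceding-work term alone can already be $\tau$, and idle time can separately be $\tau$. The paper's key step is precisely that the idle time \emph{is} sensitive to $f$: defining $h(t)=\sum_{j',s\lhd_{j'} t} y_{i,j',s}F((t-s)/p_{i,j'})$, one shows $\widehat\E[\idle(t)]\le e^{-h(t)}\le 1-(1-1/e)h(t)$, and then the combined bound becomes $\widehat\E[C_j]-\tau-p_{i,j}\le \int_0^\tau\big(g(t)-(1-1/e)h(t)\big)\sfd t$, which is controlled by the single parameter $\rho=\sup_{\phi}\big(F(\phi)-(1-1/e)\int_0^\phi F\big)/\phi$. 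Optimizing $\rho$ against $\beta=\E[\theta]$ is what yields $1.8786$. Relatedly, biasing $f$ ``toward smaller values'' is the wrong direction: the paper shifts mass from both ends toward the middle (an increasing density on $[0,d]$, zero after $d\approx 0.859$), which simultaneously lowers the chance that a long job gets a small $\tau$ and keeps $\E[\theta]$ below $1/2$.
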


Surprisingly, we give this result by rounding a very simple and natural LP that has not been studied in previous works.
Our LP can be viewed as a stronger version of the time-indexed LP in \cite{schulz2002scheduling}, by taking the non-preemption requirement into consideration.  However, even with this stronger LP, the rounding algorithm in \cite{schulz2002scheduling} does not yield a better than 2-approximation (see the discussion about use of uniform distribution in Section~\ref{sec:n-analysis}), and we believe this is why the previous works overlooked this simple LP.  Improving the $2$-approximation ratio requires not only the stronger LP, but also novel rounding algorithm and analysis. 

Our result also gives a positive answer to the conjecture made by Sviridenko and Wiese \cite{sviridenko2013approximating}. They considered a configuration LP where there is a variable for every machine $i \in M$ and  subset of jobs $S \subseteq J$. The variable is associated with the optimal total weighted completion time of the jobs in $S$ on machine $i$. They showed that one can solve their LP within a factor of $1+\eps$, but could not give a better than 2-approximation, conjecturing that their LP have an integrality gap strictly less than 2.  

Indeed, one can show that the configuration LP of \cite{sviridenko2013approximating} is the strongest among all convex programmings of the following form (see Appendix~\ref{sec:configuration-LP}): minimize $\sum_{i \in M} f_i(x_i)$ subject to $\sum_{i \in M}x_{i,j} = 1$ for every $j \in J$ and $x_{i, j} \geq 0$ for every $i \in M, j \in J$, where $x_i = (x_{i,j})_{j \in J} \in [0,1]^J$ and $f_i$ is some convex function over $[0, 1]^J$ such that if $x_i \in \{0, 1\}^J$, then $f_i(x_i)$ is at most the total weighted completion time of scheduling jobs $\{j:x_{i,j}=1\}$ optimally on machine $i$.   All results mentioned in this paper (including our results) are based on programmings of this form and thus the configuration LP is the strongest among them.  Hence, our result gives a $1.8786$ upper bound on the integrality gap of the configuration LP. 

With a solution to the configuration LP, one can derive a natural independent rounding algorithm. For each job $j$, independently assign $j$ to a machine $i$ with probability $x_{i,j}$. Then for every machine $i$, we schedule all jobs assigned to $i$; this can be done optimally if all release times are $0$ \cite{smith1956various}, and nearly optimally (within $(1+\eps)$ factor) in general \cite{afrati1999approximation,hall1997scheduling}.  When all jobs have release time $0$, the algorithm gives a $1.5$-approximation. However, \cite{BansalSS15} showed this independent rounding algorithm can not give a better than 1.5-approximation, which motivated them to develop a clever dependent rounding algorithm. 

For $R|r_j|\sum_{j}w_jC_j$, the independent rounding algorithm is known to give a $2$-approximation \cite{schulz2002scheduling,Skutella01}. In contrast to the status for $R||\sum_{j}w_jC_j$, no matching lower bound was known for this algorithm. Our result indirectly shows that the independent rounding can achieve $1.8786$-approximation.  Thus we do not need to apply the sophisticated dependence rounding scheme of \cite{BansalSS15}, which only led to a tiny improvement on the approximation ratio for $R| |\sum_{j}w_jC_j$.  We complement our positive result by showing that the independent rounding algorithm can not give an approximation ratio better than $e/(e-1) \approx 1.581$. 
\begin{theorem}[Section~\ref{sec:lb}]
	\label{thm:lb}
	There is an instance for which the independent rounding gives an approximation ratio worse than $e / (e - 1) - \eps \geq 1.581-\eps$ for any $\eps > 0$. 
\end{theorem}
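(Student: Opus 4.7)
The plan is to construct a family of instances parameterized by a large integer $n$, together with a natural fractional assignment $x$ to be rounded, on which independent rounding achieves expected cost at least $(e/(e-1)-o(1))\cdot \opt$ as $n\to\infty$.

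First, I would choose $n$ identical machines and a symmetric job set so that the natural fractional LP optimum is uniform, $x_{i,j}=1/n$ for every $i,j$. Under independent rounding each machine then receives a $\mathrm{Binomial}(|J|,1/n)$-distributed set of jobs, which converges to a Poisson distribution as $n\to\infty$. I would design the jobs (release times, sizes, weights) so that the LP value equals $\opt$ to leading order; this ensures that the entire approximation gap arises from the rounding step, not from the LP relaxation.

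Second, I would analyze the expected single-machine cost after independent rounding. Writing $g(K)$ for the optimal single-machine total weighted completion time on a machine that receives $K$ randomly chosen jobs (with the single-machine release-time instance inherited from the construction), and writing $\lambda$ for the LP mean load per machine, the ratio of expected rounding cost to $\opt$ reduces to $\mathbb{E}[g(K)]/g(\lambda)$ with $K\sim\mathrm{Poisson}(\lambda)$. The strategy is to choose the job parameters so that this ratio tends to $e/(e-1)$; taking $\lambda=1$ is natural because it makes $\Pr[K=0]=1/e$ and $\Pr[K\geq 1]=1-1/e$, the source of the target factor $1/(1-1/e)$.

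The main obstacle is realizing the amplification factor $e/(e-1)$ rather than the weaker $3/2$ that arises in the release-time-free case (\cite{BansalSS15}): for unit-size unit-weight jobs $g(K)=K(K+1)/2$ and then $\mathbb{E}[g(K)]/g(1)=3/2$. To push the ratio above $3/2$, release times must create an ``all-or-nothing'' penalty so that the $1/e$ fraction of empty machines is ``wasted'' and those jobs must be carried by the remaining $(1-1/e)$ fraction at higher cost. A concrete candidate is a multi-round instance with one heavy job per round per machine, where the release time of each round is placed just before the previous round would have completed under OPT; missing a round then shifts all later completion times by a unit, so the Poisson-tail idle probability $1/e$ multiplies the per-round cost, and summing over rounds yields the amplification factor $1/(1-1/e)=e/(e-1)$ in the limit. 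Finally I would let $n\to\infty$ (so that $\mathrm{Binomial}\to\mathrm{Poisson}$ cleanly) and let the number of rounds grow appropriately to drive the ratio to $e/(e-1)-\epsilon$ for any $\epsilon>0$.
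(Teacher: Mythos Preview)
Your plan diverges from the paper's construction and, more importantly, the multi-round argument you sketch does not yield $e/(e-1)$.

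The flaw is in the sentence ``Poisson-tail idle probability $1/e$ multiplies the per-round cost.''  In your instance with one unit job per machine per round and round-$r$ release time placed just after round $r-1$ would finish in OPT, a machine is idle in round $r$ only when it receives zero jobs \emph{and} has no backlog.  But the per-round loads $K_r$ are i.i.d.\ with mean $1$ and positive variance, so the backlog $\max\{0,\sum_{s\le r}(K_s-1)\}$ is a reflected mean-zero random walk: once it is positive, rounds with $K_r=0$ simply drain the queue rather than waste capacity.  The completion time of a round-$r$ job is therefore $r+\Theta(\sqrt r)$ while OPT is $r+O(1)$, and with uniform weights the ratio of total weighted completion times is $1+o(1)$.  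No choice of ``letting the number of rounds grow'' rescues this; the release-time baseline $r$ dominates both numerator and denominator.  In fact, adding release times to the single-round unit-job instance \emph{lowers} the ratio below $3/2$, because the fixed release time appears additively in both OPT and the rounded cost.

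The paper's construction is structurally different and does not rely on Poisson occupancy.  It uses $1/\eps+1$ identical machines, $1/\eps$ big jobs of length $T$ each assigned with mass $1-\eps$ to its ``own'' machine (so the fractional solution is \emph{not} uniform), and $T$ unit jobs $j_t$ with release time $t-1$ and exponentially decaying weight $e^{-t/T}/T$ spread uniformly.  The fractional solution is a convex combination of integral optima, so the LP value equals $\opt\approx(1-1/e)T$.  Conditioning on the big job landing on machine $1$, any schedule must pick a threshold $\tau=\beta T$ at which to insert it; the exponential weight profile is chosen precisely so that the resulting cost $1-(1+\beta)/e$ from small jobs plus $(1+\beta)/e$ from the big job is \emph{independent of} $\beta$ and equals $1$ (times $\eps T$), giving the ratio $1/(1-1/e)=e/(e-1)$.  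The essential ingredients you are missing are a single large blocking job per machine (not many symmetric small ones) and a carefully tuned decaying weight profile that makes the adversary's best response to the big job cost the same no matter where it is placed.
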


\smallskip

We continue to study the preemptive case. In the preemptive case, two variants were considered in the literature depending on whether jobs can migrate across machines or must be completed scheduled on one of the machines. If migration is not allowed, the work in \cite{schulz2002scheduling} still gives a $(2+\eps)$-approximation since the LP therein is a relaxation for preemptive schedules but the rounding outputs a non-preemptive schedule. 
If migration is allowed, \cite{Skutella01} gives a 3-approximation. Our main result for the preemptive case is the first better than 2-approximation when migration is not allowed.

\begin{theorem}[Section~\ref{sec:preemptive}]
	\label{thm:preemptive}
	For a constant $\alpha < 1.99971$, there exists an $\alpha$-approximation for $R|r_j,pmtn|\sum_{j}w_jC_j$.
\end{theorem}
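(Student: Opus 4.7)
The plan is to formulate a configuration LP that explicitly encodes the no-migration constraint. For every job $j$ and every valid single-machine preemptive schedule $\sigma$ of $j$ (a choice of a machine $i$ together with $p_{i,j}$ units of processing time placed inside $[r_j,\infty)$ on $i$), I would introduce a variable $x_{j,\sigma}$, with constraints $\sum_\sigma x_{j,\sigma}=1$ for every $j$ and $\sum_{(j,\sigma)}x_{j,\sigma}\cdot\mathbf{1}[\sigma\text{ occupies }i\text{ at }t]\le 1$ for every machine--time pair $(i,t)$; the objective is $\sum_{j,\sigma}x_{j,\sigma}w_jC_j(\sigma)$, where $C_j(\sigma)$ is $j$'s completion time under $\sigma$. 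Although this LP has exponentially many variables, I would argue it can be solved to within a $(1+\eps)$ factor by running the ellipsoid method on the dual, whose separation problem reduces to a single-machine preemptive scheduling subproblem that admits a $(1+\eps)$-approximation via known time-indexed techniques.

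Given a fractional optimum, the rounding step independently samples, for each job $j$, a schedule $\sigma_j$ with probability $x_{j,\sigma_j}$; this assigns $j$ to a machine $i_j$ together with a preferred time profile. To convert the sampled profiles into a feasible non-migratory schedule on each machine, I would transplant the non-uniform random-offset technique behind Theorem~\ref{thm:non-preemptive}: draw $\delta_j$ from the same carefully chosen distribution, and at every moment process on each machine the assigned job whose next unfinished LP piece has the smallest offset-shifted virtual time. Because the configuration LP forbids a single job from overlapping with itself, each infinitesimal piece of $\sigma_j$ interacts cleanly with pieces of other sampled schedules, so the shifted-priority scheduler is well-defined.

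The core of the analysis is a per-job expected-delay bound. Fixing a job $j$, conditioning on $\sigma_j$ and $i_j$, and considering a time $t$ at which $\sigma_j$ processes $j$, I would express the expected delay until $j$ is actually processed near the virtual time $t+\delta_j$ as an integral over competing jobs $k$ of the probability that $k$ occupies $i_j$ at a moment with higher priority. The configuration-LP capacity constraint gives that the total fractional occupancy on each machine at each time is at most $1$, and independence of the offsets lets me apply essentially the same inequality used in Section~\ref{sec:non-preemptive}, yielding an upper bound $\alpha\cdot C_j(\sigma_j)$ in expectation for some $\alpha<2$. Linearity of expectation, together with $\E[C_j(\sigma_j)]=\sum_\sigma x_{j,\sigma}C_j(\sigma)$, then gives the claimed approximation.

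The step I expect to be the main obstacle is exactly this delay integral: unlike the non-preemptive case where each competitor contributes a single contiguous block, here a preempted competitor can generate many small conflicts whose cumulative effect must be controlled using both the configuration constraint and the non-uniform offset distribution. Squeezing the right constants out of this bound, together with additional $(1+\eps)$-type losses from solving the configuration LP approximately and from handling the preemption overhead at the boundaries of small pieces, is what drives the approximation ratio down to the specific value $\alpha<1.99971$ and explains why the preemptive improvement is much smaller than the non-preemptive $1.8786$.
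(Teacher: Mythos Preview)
Your LP and independent-sampling step match the paper's $\lpchain$ formulation and rounding exactly (and the dual separation oracle is in fact exact, not just a $(1+\eps)$-approximation: for fixed $i,j,C$ one simply picks the $p_{i,j}-1$ cheapest slots before $C$). The paper also reduces each sampled chain to a single scalar $\tau_j = A_j(\theta_j p_{i_j,j})$ and then schedules \emph{non-preemptively} in order of $\tau_j$; your piece-by-piece ``offset-shifted virtual time'' scheduler is not what is analyzed and is not obviously bounded by the same argument.

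The substantive gap is the claim that ``essentially the same inequality used in Section~\ref{sec:non-preemptive}'' applies. It does not, and the paper says so explicitly. In the non-preemptive analysis the key object is the function $h(t)=\sum_{j',\,s\lhd_{j'} t} y_{i,j',s}F\!\big(\tfrac{t-s}{p_{i,j'}}\big)$, which simultaneously (i) upper-bounds the contribution to the volume integral $\int g$ and (ii) lower-bounds the non-idle probability at the \emph{same} time $t$. This pointwise coupling is what turns the distribution's shape into a strict improvement over $2$. When chains are fragmented, a competitor $j'$ can contribute mass to the volume integral at times arbitrarily far from where its own pieces sit, so no analogue of $h$ exists and the per-time-step argument collapses. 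The paper replaces it with a global, amortized argument: classify chains as heavy/light by $A^{-1}(\tau)\ge p_{i,j'}/15$, split heavy chains into good/bad by $A^{-1}(9\tau/10)\ge A^{-1}(\tau)/2$, and in the heavy case show that good chains force either $[\tfrac{9}{10}\tau,\tau]$ to be entirely non-idle or a quantifiable volume of work to finish before $\tau$ (Lemmas~\ref{lem:910}--\ref{lem:20}, together with a capping/scaling trick, Lemma~\ref{lem:scale}). A different, much simpler distribution (uniform on $(\lambda,1-\lambda)$ with $\lambda=1/5100$) is used, and the two cases are balanced to get the $1.99971$. None of this structure is present in your outline, and without it the delay integral you describe does not yield a constant below $2$.
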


We note that our algorithm is based on a stronger linear programming relaxation. The configuration LP of \cite{sviridenko2013approximating} is for non-preemptive schedules hence not usable for preemptive schedules. Our LP is a different type of configuration LP where there are variables for each job's complete schedules. While we use an  LP  for 
preemptive schedules, we output a non-preemptive schedule.

\subsection{Our Techniques}
	
	As mentioned before, we give a better than 2-approximation for the non-preemptive case based on a very simple LP.  	In this LP, we have an indicator variable $y_{i,j,s}$ which is 1 if job $j$ starts at time $s$ on machine $i$. Then, we add an obvious constraint that no more than one job can be processed at any time on any machine. 
	This LP has a pseudo-polynomial size but can be reduced to a polynomial size using standard techniques with a loss of $(1+\eps)$ factor in approximation.
	
	As mentioned earlier, our algorithm falls into the independent rounding framework: we assign each job $j$ to machine $i$ with probability $x_{i,j}=\sum_{s} y_{i,j,s}$ independently following the optimal LP solution. Then, it remains to schedule jobs assigned to each machine.\footnote{Since $1|r_j|\sum_j w_jC_j$ admits a PTAS, given the set of jobs assigned to $i$, one can find a $(1+\eps)$-approximately optimal schedule on $i$. However, it is hard to directly relate this schedule to the fractional solution.}   Any solution to our LP is also a solution to the LP in \cite{schulz2002scheduling}.  When restricted to a solution to our LP,  the rounding algorithm of \cite{schulz2002scheduling} works as follows. For every $j$ that is assigned to $i$, we choose $s_j=s$ randomly with probability proportional to $y_{i, j, s}$. Then we choose $\tau_j$ uniformly at random from $[s_j, s_j+p_{i,j}]$; here $\tau_j - s_j$ can be viewed as a random extra offset applied to $j$. We schedule jobs assigned to $i$ non-preemptively in increasing order of $\tau_j$ values.  	While this gives a 2-approximation, this is the best one can obtain using their LP since it has a matching integrality gap.  Even with our stronger LP, the algorithm only gives a 2-approximation. 
	
	We use a more sophisticated distribution to sample $\tau_j$ for individual jobs. Discovering such a distribution and showing how it helps improve the approximation ratio requires a novel analysis. We are not the first that use non-uniform distributions for scheduling problems. Goemans \etal \cite{goemans2002single} used non-uniform distributions in their $\alpha$-point rounding for the single machine scheduling, i.e. $1|r_j|\sum_{j}w_jC_j$ to give a 1.6853-approximation. However, their analysis does not lend itself to multiple machines. The LP objective considered in \cite{goemans2002single} uses the notion of fractional completion time, which views a job $j$ of size $p_j$ as consisting of $p_{j}$ unit pieces with weight $w_j / p_j$. In this view, the optimal schedule trivially follows from the simple greedy Smith rule. \cite{goemans2002single} heavily uses this special structure to get a better than 2 approximation. However, this relaxation inherently loses a factor 2 when applied to  multiple machines even with some correction terms \cite{schulz2002scheduling,Skutella01}. Hence to overcome the 2-approximation barrier, one has to deviate from this relaxation and the special structure used in \cite{goemans2002single}, which calls for use of a stronger LP along with new algorithms and/or analysis. Intuitions on the effect of non-uniform distributions can be found in Section~\ref{sec:non-preemptive}, particularly in discussion of the limitations of uniform distributions.

	\smallskip
	As mentioned before, the preemptive result requires an even stronger LP where there is a variable for each job's complete schedule. Since preemption is allowed, even when all parameters are polynomially bounded, the LP has exponentially many variables. We solve  this LP by solving its dual with help of a separation oracle. While the algorithm for the non-preemptive case naturally extends to the preemptive case, the analysis doesn't. At a high level, the analysis for both cases needs to carefully handle the interaction between busy times and idle times which both can contribute jobs delays. Non-preemptive schedules possess better structural properties which allow us to break down the analysis into that for each time step. However, preemptive schedules lack such properties and require a different analysis of a somewhat amortized flavor.

\subsection{Other Related Work} 
The first non-trivial $O(\log^2 n)$-approximation for $R | r_{ij} | \sum_j w_j C_j$ was given by Stein \etal \cite{phillips1997task} using a hypergraph matching. Then, subsequent works \cite{hall1997scheduling,schulz2002scheduling,Skutella01}  gave constant approximations, culminating in a 2-approximation \cite{Skutella01} which was the best known prior to our work.  The work in \cite{hall1997scheduling} uses the celebrated rounding for the generalized assignment problem \cite{shmoys1993approximation} to round an LP with intervals of doubling lengths, thereby giving a 16/3-approximation. As mentioned before, \cite{schulz2002scheduling} gives a $(2+\eps)$-approximation, and there is an easy instance of matching integrality gap for their LP.  Subsequently, Skutella gave a 2-approximation using a convex programming \cite{Skutella01}, which is tight since the CP has an integrality gap of 2.  When machines are identical or uniformly related, a special case of unrelated machines, PTASes are known \cite{afrati1999approximation,skutella2000ptas,chekuri2001ptas}. 

Minimizing makespan or equivalently the maximum completion time is a closely related objective. For this problem when all jobs arrive at time 0, Lensta \etal gave a 2-approximation and showed it does not admit a better than 1.5 approximation unless P = NP \cite{lenstra1990approximation}. Reducing this gap remains open. Svensson showed that one can estimate the optimal makespan within a factor of $33/17 + \eps$ for the special case of restricted assignment \cite{svensson2012santa}. For other interesting special cases, see \cite{ebenlendr2008graph} and its follow-up works. For the dual objective of maximizing the minimum load on any machine, see~\cite{bansal2006santa,asadpour2010approximation,asadpour2008santa,chakrabarty2009allocating,feige2008allocations}. For the minimizing $\ell_p$ norms of completion times, see~\cite{azar2005convex,kumar2009unified}. 

For the objective of minimizing total flow time, i.e. $\sum_j (C_j - r_j)$, a poly-logarithmic approximation is known \cite{BansalK15}. For  earlier works for the restricted assignment case, see~\cite{GargK07,GargKM08}.  Due to the vast literature on scheduling, our discussion on related work is necessarily incomplete. For a nice survey and more pointers, see~\cite{chekuri2004approximation}.

	\section{Non-Preemptive Scheduling}
	\label{sec:non-preemptive}

We begin by giving an LP for the non-preemptive case. To present our algorithm and analysis more transparently, we assume that all parameters are polynomially bounded, i.e. all $w_j, r_j, p_{i,j}$ are $\texttt{poly}(|J|, |M|)$. Although we can also handle the case when $p_{ij} = \infty$ by not allowing $j$ to be scheduled on machine $i$, we assume such a case does not happen since the extension is straightforward. These simplifying assumptions will be removed in Section~\ref{sec:non-preemptive-a}. 

Define $T:= \sum_{i,j} p_{i,j} + \max_{j} r_j$ so that any `reasonable' scheduler can complete all jobs by the time $T$. Throughout this section, $s$ is always an integer. 
	\begin{equation}
		\min \qquad \sum_{i \in M,\ j \in J,\ s \in [0, T)} w_j y_{i, j, s} (s + p_{i,j})   \tag{$\lpinterval$} 		
		\label{LP:interval}
	\end{equation}\vspace*{-22pt}
$\qquad\qquad\qquad \text{s.t} $
	\begin{alignat}{2}
		\sum_{s \in [0, T)}y_{i, j, s} &= x_{i, j} &\qquad \forall i &\in M, j \in J \label{lpi-1}\\
		\sum_{i \in M}x_{i, j} & = 1 &\qquad \forall j &\in J \label{lpi-2} \\
		\sum_{j \in J,\ s \in [\max\{0,t-p_{i,j}\}, t)} y_{i, j, s} &\leq 1 &\qquad \forall i &\in M, t \in [T] \label{LPC:cover-by-one-interval}\\
		y_{i, j, s} &\geq 0 &\qquad \forall i &\in M, j \in J, s \in [0, T)  \nonumber \\ 
		y_{i, j, s} & = 0 &\qquad \forall i &\in M, j \in J, s < r_j \text{ or } s > T-p_{i,j} \nonumber \label{lpi-5} 
	\end{alignat}
	
 	To see this is a valid LP relaxation for non-preemptive schedules, assume that all variables can only take integer values. Then, the first two constraints require that each job $j$ must be assigned to exactly one machine, which is captured by the indicator variable $x_{i,j}$. The variable $y_{i,j,s} = 1$ if and only if $j$ starts getting processed at time $s$ on machine $i$. The constraints 
 (\ref{LPC:cover-by-one-interval}) ensure that only one job gets processed at a time on any machine. The last constraint prohibits jobs from getting processed before their arrival times. We obtain a valid LP relaxation by allowing variables to have fractional values.

 \paragraph{Rounding.}
	We now describe how to round the LP, which consists of two steps. The first step is to define a `pseudo' arrival time $\tau_j \geq r_j$ for each job $j$. For each job $j$, we can view $\{y_{i, j, s}\}_{i,s}$ as a probability distribution over pairs $(i,s)$ due to Constraint (\ref{lpi-1}), and choose a pair $(i_j, s_j)$ according to the distribution randomly and independently. Job $j$ will be scheduled on machine $i_j$. 
	Let $\bfTheta$ be some distribution over real numbers in $[0, 1]$ where no number in the distribution occurs with positive probability; $\bfTheta$ will be fixed later. We randomly and independently choose a number $\theta_j$ from $\bfTheta$. Define $\tau_j = s_j + \theta_j\cdot p_{i_j, j}$. We assume w.l.o.g. that all jobs have different $\tau_j$ values since this event happens almost surely. 
	
	In the second step, we finalize each machine's schedule. For each $i \in M$, let $J_i = \{j\in J:i_j = i \}$ be the set of jobs that are assigned to $i$.  Let $\pi$ be the ordering of $J_i$ according to increasing order of $\tau_j$ values.
	We schedule jobs in $J_i$ on machine $i$ according to $\pi$, pretending that  $\tau_j$ is job $j$'s actual arrival time. That is, job $j \in J_i $ starts when all jobs in $J_i$ ahead of $j$ in the ordering of $\pi$ complete, or at time $\tau_j$, whichever comes later. 

	Notice that if we use the actual arrival times $r_j$ instead of the pseudo ones for scheduling, we can obtain the optimum schedule on $i$ respecting the ordering $\pi$ -- that is, each job $j \in J_i$ starts when all jobs in $J_i$ before $j$ according to $\pi$ complete, or at time $r_j$, whichever comes later. The schedule given by our algorithm might be worse than this optimum schedule respecting $\pi$. However, for the sake of analysis, it is more convenient to use our schedule, rather than the optimum one. Our schedule on machine $i$ might have fractional starting times, but it is not an issue since we can convert the schedule to the optimum one respecting $\pi$, in which all starting times are integral. 
	
\subsection{Analysis}
	\label{sec:n-analysis}
	
		It will be convenient to think of the LP solution as a set $\calR_i$ of rectangles for each machine $i$. For each pair of $j$ and $s$ with $y_{i, j, s} > 0$, we have a rectangle $R_{i, j, s}$ of length $p_{i, j}$ and height $y_{i, j, s}$ in $\calR_i$.  Horizontally, the rectangle $R_{i, j, s}$ covers the time interval $(s, s+p_{i, j}]$. For any machine $i$, the total height of rectangles in $\calR_i$ covering any time point $t \in (0, T]$ is at most $1$.

	We will analyze the expected completion time of each job $j$ and upper bound it by the corresponding LP quantity, $\sum_{i,s} y_{i, j, s} (s + p_{i,j})$. Towards this end, henceforth we fix a job $j \in J$, the machine $i \in M$ job $j$ is assigned to, and a value of $\tau \in (0, T]$ job $j$ is given. We consider $\E[C_j| i_j = i, \tau_j = \tau]$, i.e, the expected completion time of $j$, conditioned on the event that $i_j = i$ and $\tau_j = \tau$. For notational convenience, we use $\widehat \E[\cdot]$ to denote $\E[\cdot|i_j =i, \tau_j =\tau]$ and $\widehat \Pr[\cdot]$ to denote $\Pr[\cdot|i_j =i, \tau_j =\tau]$. After bounding $\widehat \E[C_j]$ by $\tau$ and $p_{i,j}$, we will get the desired bound on $\E[C_j]$ by deconditioning.
	
	The key issue we have to handle when jobs have arrival times is that there can be idle times before job $j$ starts. Hence we have to consider not only the volume of jobs scheduled before job $j$, but also the total length of idle times. 
	
	\begin{definition}
		For a time point $t \in (0, T]$, we say that $t$ is \emph{idle}, if there are no jobs scheduled at time $t$ on machine $i$ in our schedule. Let $\idle(t)$ indicate whether the time point $t$ is idle or not. 		
	\end{definition}
	
 	With this definition, we are ready to formally break down $C_j$ into several quantities of different characteristics.

\begin{equation}
\displaystyle C_j = \sum_{j' \in J_i: \tau_{j'} < \tau_j} p_{i, j'} + \int_{0}^{\tau_j} \idle(t)\sfd t+ p_{i, j}.
		\label{obs:decompose-C-j}
\end{equation}

	The first term is the total length of jobs scheduled before $j$ on machine $i$ and the second is the total length of idle times before $\tau_j$. Notice that there are no idle points in $[\tau_j, C_j)$ since all jobs $j' \in J_i$ scheduled before $j$ have $\tau_{j'} < \tau_j$. 

\vspace{-2mm}	
\paragraph{Uniform Distribution and its Limitations.}
	Before we present a better than 2-approximation, we take a short detour to discuss how we recover a simple 2-approximation 
	by setting $\bfTheta$ to be the uniform distribution over $[0, 1]$.  To compute $\widehat\E[C_j]$, we first consider $\widehat\E\big[\sum_{j'\in J_i, \tau_{j'} < \tau}p_{i, j'}\big]$.  If some $j'  \in J_i$ has $\tau_{j'} < \tau  = \tau_j$, we say that the pair $(j', s_{j'})$ contributed $p_{i, j'}$ to the sum.   For each $j' \neq j$ and integer $s < \tau$, the expected contribution of the pair $(j', s)$ to the sum is $\widehat \Pr\big[i_{j'} = i, s_{j'} = s\big]\widehat \Pr\big[\tau_{j'} < \tau |i_{j'} = i, s_{j'}=s\big]p_{i,j'} = y_{i, j', s}\cdot \min\{1, (\tau-s)/p_{i,j'}\} \cdot p_{i, j'} = y_{i, j', s}\cdot \min\{p_{i, j'}, \tau-s\}$. This is exactly the area of the portion of the rectangle $R_{i, j', s}$ before time point $\tau$. 
	Summing up over all pairs $(j' \neq j, s < \tau)$, $\widehat\E\big[\sum_{j'\in J_i, \tau_{j'} < \tau}p_{i, j'}\big]$ is at most the total area of the portions of $\calR_i$ before $\tau$, which is at most $\tau$.  The total length of idle slots before $\tau$ is obviously at most $\tau$. Thus, $\widehat\E[C_j] \leq 2\tau + p_{i, j}$. Since $\E[\tau_j|i_j = i, s_j = s] = s+p_{i,j}/2$, we have $\E[C_j |i_j = i, s_j = s] \leq 2(s+p_{i,j}/2)+p_{i,j} = 2(s+p_{i,j})$.  Since $\Pr[i_j = i, s_j = s] = y_{i, j, s}$, we have that $\E[C_j] \leq 2\sum_{i, s}y_{i,j,s}(s+p_{i,j})$, which is exactly twice the contribution of $j$ to the $\lpinterval$ objective.  Thus, we obtain a 2-approximation for the problem. 
	
	\smallskip
	However, uniform distribution does not yield a better than 2-approximation. To see this, consider the following instance and LP solution. There are $1 / \eps +1$ machines indexed by $1, 2, ..., 1/ \eps +1$. There is one unit-sized job $j^*$ with arrival time $r$ and it is scheduled on each of machines $1, 2, ..., 1/ \eps$  by $\eps$ fraction during $(r, r+1]$; $j^*$ is not allowed to be scheduled on machine $1 / \eps +1$. There are $1 / \eps$ big jobs of sizes $p \gg r$ with arrival time 0, which are indexed by $j_1, j_2, ..., j_{1/ \eps} $. Each big job $j_k$ can be assigned to either machine $k$ or machine $1 / \eps+1$. The job $j_k$ starts on machine $k$ at time 0 by $1 - \eps$ fraction, and on machine $1 / \eps+1$ by $\eps$ fraction. For simplicity, say the unit-sized job has a unit weight and the big jobs have zero (or infinitesimally small) weights so that the objective is essentially dominated by the unit sized job $j^*$'s completion time. Clearly, $j^*$ has completion time $r+1$ in the LP solution. 
	
	We now show that the above rounding makes $j^*$'s completion time arbitrarily close to $2r$ in expectation. Fix the machine $j^*$ is assigned to by the above algorithm; w.l.o.g. assume that the machine is 1. With $1 - \eps$ probability, job $j_1$ is assigned to machine 1; under this event, $j_1$ has a smaller $\tau$ value than $j^*$ with probability $r / p$.  Hence $j^*$ starts at time $p$ with probability $(1-\eps)r/p$, otherwise at time $r$, meaning that $j^*$'s expected starting time is at least $(1 - \eps) (r / p) \times p + (1 - (1-\eps)r/p) \times r$ which tends to $2r$ as $\eps \rightarrow 0$ and $p \rightarrow \infty$. This shows one cannot get a better than 2-approximation using uniform distribution. 

\vspace{-2mm}	
\paragraph{Finding a Better Distribution.} 	
	The above example is simple yet illuminating. We first observe that pushing back the small job a lot due to big job might be a sub-optimal choice. Intuitively, a bigger job is less sensitive to delay since the delay can be charged to the job's processing time. We could try to shift mass in the distribution $\bfTheta$ to the right. Then, big jobs will be less likely to have smaller $\tau$ values than the small job. However, this could increase $\tau_j$ values in expectation, thereby increasing the objective. We would like to avoid increasing the offset added to $\tau_j$ which was $p_{i,j} /2$ (assuming that job $j$ goes to machine $i$).  To satisfy both requirements, we shall shift the mass from both ends to the middle. In the above example, the job $j^*$ overlaps the left-end of the big job $j_1$. Shifting the mass from the left to the middle will decrease the probability that $\tau_{j_1}<\tau_{j^*}$.  On the other hand, shifting the mass from the right to the middle will decrease the expectation of $\tau_{j^*}$. 

The remainder of this section is devoted to studying the effect of using different distributions on the approximation ratio.  Let $f:[0, 1]\to \R_{\geq 0}$ be the probability density function (PDF) of $\bfTheta$ and $F(t) = \int_{0}^t f(t')\sfd t'$ be the cumulative distribution function (CDF) of $\bfTheta$. 
Recall that we fixed a job $j \in J$, the machine $i \in M$ job $j$ is assigned to, and a value of $\tau \in [0, T)$ job $j$ is given. 
For every $j' \in J\setminus j, t \in (0, T]$ and integer $s$, we shall use $s \lhd_{j'} t$ to indicate that $s \in \left[\max\{0,t - p_{i, j'}\}, t\right)$. In other words, $s \lhd_{j'} t$ means that if $j'$ starts at $s$, then it must get processed at time $t$. 
For every $t \in (0, T]$,  define 
		\begin{align*}
			g(t) = \sum_{j' \in J \setminus j, \ s \lhd_{j'} t}y_{i, j', s}\cdot f\left(\frac{t-s}{p_{i, j'}}\right) \qquad \text{and} \qquad
			h(t) = \sum_{j' \in J \setminus j, \ s \lhd_{j'} t}y_{i, j', s} \cdot F\left(\frac{t-s}{p_{i, j'}}\right).
		\end{align*}
		
It is worth mentioning that $\sum_{s \lhd_{j'} t} y_{i, j', s}\cdot f\left(\frac{t-s}{p_{i, j'}}\right) \frac{1}{p_{i,j'}}$ is the density of the probability that $\tau_{j'}  = t$. Thus, integrating $g(t)$ from time 0 to $\tau = \tau_j$ will give the expected volume of work done before job $j$, which is the first term of (\ref{obs:decompose-C-j}) in expectation. The usefulness of $h(t)$ will be discussed shortly.

\begin{lemma}  \label{lemma:bound-length}
	$\displaystyle \widehat \E\left[\sum_{j' \in J_i: \tau_{j'}< \tau}p_{i, j'}\right] = \int_{0}^\tau g(t)\sfd t. $
\end{lemma}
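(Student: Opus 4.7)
My plan is to compute both sides directly and show they agree term by term. The main tool will be linearity of expectation together with independence across jobs.

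First, I would apply linearity of expectation and indicator variables to rewrite the left-hand side as
\[
\widehat \E\!\left[\sum_{j' \in J_i: \tau_{j'}< \tau}p_{i, j'}\right] = \sum_{j' \in J \setminus j} p_{i, j'} \cdot \widehat\Pr\big[i_{j'}=i,\ \tau_{j'}<\tau\big].
\]
Since the random choices $(i_{j'},s_{j'},\theta_{j'})$ for $j'\ne j$ are made independently of those for $j$, the conditioning on the event $\{i_j=i,\tau_j=\tau\}$ can be dropped for each such $j'$. Then I would condition on the choice of $(i_{j'},s_{j'})$: by the rounding rule, $\Pr[i_{j'}=i,s_{j'}=s]=y_{i,j',s}$, and conditional on this event, $\tau_{j'}=s+\theta_{j'}p_{i,j'}$ where $\theta_{j'}\sim\bfTheta$, so $\Pr[\tau_{j'}<\tau\mid i_{j'}=i,s_{j'}=s]=F\!\left(\frac{\tau-s}{p_{i,j'}}\right)$ (understanding $F(u)=0$ for $u\le 0$ and $F(u)=1$ for $u\ge 1$). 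This gives
\[
\widehat \E\!\left[\sum_{j' \in J_i: \tau_{j'}< \tau}p_{i, j'}\right] = \sum_{j' \in J \setminus j} p_{i, j'} \sum_{s<\tau} y_{i,j',s}\, F\!\left(\tfrac{\tau-s}{p_{i,j'}}\right).
\]

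Second, I would compute the right-hand side by expanding the definition of $g$, swapping sum and integral (Fubini on a finite sum), and making the substitution $u=(t-s)/p_{i,j'}$ in each inner integral:
\[
\int_{0}^\tau g(t)\sfd t
= \sum_{j' \in J \setminus j}\sum_{s} y_{i,j',s} \int_{s}^{\min\{\tau,\, s+p_{i,j'}\}} f\!\left(\tfrac{t-s}{p_{i,j'}}\right)\sfd t
= \sum_{j' \in J \setminus j} p_{i,j'} \sum_{s<\tau} y_{i,j',s}\, F\!\left(\tfrac{\tau-s}{p_{i,j'}}\right),
\]
where the limits of integration come from unpacking the indicator $s\lhd_{j'}t$ (which is $s\le t < s+p_{i,j'}$, together with $s\ge 0$), and the fact that only those $s<\tau$ contribute a nonempty interval. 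The two expressions now match.

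There is no real obstacle; the only subtlety is bookkeeping at the two boundary cases $s+p_{i,j'}\le\tau$ (where the probability saturates at $1$ and the integral runs over the full width $p_{i,j'}$) and $s\ge\tau$ (where both sides vanish). Adopting the convention that $F$ is extended by $0$ and $1$ outside $[0,1]$ handles both cases uniformly, so the two displayed sums are literally the same expression.
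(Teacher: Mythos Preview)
Your proposal is correct and follows essentially the same approach as the paper: linearity of expectation, conditioning on $(i_{j'},s_{j'})$ to pull out the factor $y_{i,j',s}$, expressing $\widehat\Pr[\tau_{j'}<\tau]$ via the CDF $F$, and then matching this against $\int_0^\tau g(t)\,\sfd t$ by a Fubini swap and the substitution $t\mapsto (t-s)/p_{i,j'}$. The paper chains LHS to RHS directly rather than meeting in the middle, but the computations are identical.
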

\vspace*{-15pt}

	\begin{flalign*}
	Proof. &&
			\text{LHS} &= \sum_{j' \in J \setminus j} p_{i, j'}\cdot \widehat \Pr[i_{j'} = i, \tau_{j'} < \tau] 
		=  \sum_{j' \in J \setminus j} p_{i, j'}\sum_{s \in [0, \tau)}y_{i, j', s} \cdot \widehat \Pr\big[\tau_{j'} < \tau |i_{j'} = i, s_{j'} = s\big] \\
		&&	&=  \sum_{j' \in J \setminus j} p_{i, j'}\sum_{s \in [0, \tau)}y_{i, j', s}\cdot \int_0^{\min\{(\tau-s)/p_{i, j'}, 1\}} f(\theta) \sfd \theta \\
		&&	&=  \sum_{j' \in J \setminus j} p_{i, j'}\sum_{s \in [0, \tau)}y_{i, j', s}\cdot \frac{1}{p_{i,j'}}\int_s^{\min\{\tau, s + p_{i, j'}\}} f\left(\frac{t-s}{p_{i,j'}}\right) \sfd t \\
		&&	&= \int_{t=0}^\tau \sum_{j' \in J \setminus j}\sum_{s \lhd_{j'} t }y_{i, j', s} \cdot f\left(\frac{t-s}{p_{i,j'}}\right) \sfd t = \int_0^\tau g(t) \sfd t. \hspace*{135pt} \qed
	\end{flalign*}

We now shift our attention to bounding the second term in  (\ref{obs:decompose-C-j}) using the function $h(t)$. As we observed when using uniform distributions, the obvious upper bound on the second term is $\tau$. To improve upon this, we need to show a considerable fraction of times are not idle. We note that  
$\sum_{s \lhd_{j'} t}y_{i, j', s} \cdot F\left(\frac{t-s}{p_{i, j'}}\right)$ is the probability that job $j'$ is processed at time $t$ when starting at $\tau_{j'}$. If such an event occurs, then time $t$ will be shown to be non-idle, hence we get some credits.

\begin{claim} \label{claim:h-as-most-1}
	$h(t) \leq 1$ for every $t \in [0, T)$. 
\end{claim}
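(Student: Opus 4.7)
The plan is to prove $h(t)\le 1$ as an essentially immediate consequence of two facts: (i) $F$ is a CDF, so $F(\cdot)\le 1$; and (ii) the LP constraint (\ref{LPC:cover-by-one-interval}) says that at every integer time the total fractional occupation of machine $i$ is at most $1$. The first fact reduces bounding $h(t)$ to bounding a pure sum of $y_{i,j',s}$-values, and the second fact provides exactly that bound.

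First I would replace every factor $F\bigl((t-s)/p_{i,j'}\bigr)$ appearing in the definition of $h(t)$ by $1$, which is valid because $\bfTheta$ is supported on $[0,1]$ and hence $F\le 1$. This gives
\[
h(t)\;\le\;\sum_{j'\in J\setminus j,\ s\lhd_{j'} t} y_{i,j',s}\;\le\;\sum_{j'\in J,\ s\lhd_{j'} t} y_{i,j',s},
\]
where the last inequality just adds back the $j'=j$ terms (which are nonnegative).

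Next I would identify the right-hand sum with the total height of the rectangles in $\calR_i$ that cover the point $t$. Indeed, $s\lhd_{j'} t$ is, by definition, $s\in[\max\{0,t-p_{i,j'}\},t)$, which is precisely the condition that the rectangle $R_{i,j',s}$, covering the interval $(s,s+p_{i,j'}]$, contains $t$. For integer $t\in[T]$, constraint (\ref{LPC:cover-by-one-interval}) says exactly that this total height is at most $1$, completing the bound.

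The only minor thing to handle is $t\in(0,T]$ that is not an integer. Since $s$ ranges over integers and all $p_{i,j'}$ are integral (we are in the polynomially-bounded-parameters regime of this section), the set of $(j',s)$ with $s\lhd_{j'} t$ is constant as $t$ varies inside $(t^\star-1,t^\star]$ for each integer $t^\star$; thus the upper bound $\sum_{j',s\lhd_{j'} t} y_{i,j',s}$ equals its value at $t=t^\star$, to which (\ref{LPC:cover-by-one-interval}) applies. I do not anticipate a real obstacle here — the whole argument is a two-line chain of inequalities, and the only thing to watch is not to confuse the $y$-sum (which is a genuine total mass and subject to the LP constraint) with $h(t)$ itself (which is weighted by the CDF values $F((t-s)/p_{i,j'})\le 1$).
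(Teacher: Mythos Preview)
Your proposal is correct and follows essentially the same approach as the paper: bound $F\le 1$ and then apply Constraint~(\ref{LPC:cover-by-one-interval}). Your handling of non-integer $t$ and the (harmless) inclusion of the $j'=j$ terms are extra details the paper glosses over, but the core two-line argument is identical.
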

\begin{proof}
	Since $F$ is a CDF, we have $F(t') \leq 1$ for every $t' \in [0, 1]$. Thus, 
	$h(t) \leq \sum_{j' \in J \setminus j, \ s \lhd_{j'} t} y_{i, j', s} \leq 1$
by Constraint~\eqref{LPC:cover-by-one-interval}.
\end{proof}

\begin{lemma} \label{lemma:bound-idle-t}
	For every $t \in (0, \tau]$, we have
		$\displaystyle \widehat E[\idle(t)] \leq e^{-h(t)}.$
\end{lemma}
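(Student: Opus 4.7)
The plan is to decompose the idleness event via independent per-job coverage events. For each job $j' \neq j$, define the indicator
$Z_{j'} = \mathbb{1}\bigl[i_{j'} = i,\ t - p_{i,j'} \leq s_{j'} < t,\ \tau_{j'} \leq t\bigr]$,
so that $Z_{j'} = 1$ iff the chosen rectangle of $j'$ covers time $t$ and $j'$'s pseudo-arrival has already occurred. Conditioning on $i_j=i, \tau_j=\tau$, the tuples $(i_{j'}, s_{j'}, \theta_{j'})$ for $j' \neq j$ remain mutually independent, so the $Z_{j'}$'s are independent. A direct computation using $\widehat\Pr[s_{j'}=s, i_{j'}=i]=y_{i,j',s}$ and $\widehat\Pr[\theta_{j'}\leq (t-s)/p_{i,j'}] = F((t-s)/p_{i,j'})$ gives
$\widehat\Pr[Z_{j'} = 1] = \sum_{s \lhd_{j'} t} y_{i,j',s} F((t-s)/p_{i,j'})$,
so $\sum_{j' \neq j}\widehat\Pr[Z_{j'}=1] = h(t)$.

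The crux will then be the scheduling claim: if $Z_{j'} = 1$ for some $j' \neq j$, then time $t$ is not idle. I will argue this by tracing through the algorithm. Suppose $Z_{j'}=1$ and let $s' := \max\{\tau_{j'}, \text{completion time of $j'$'s predecessor in } \pi\}$ be the actual start time of $j'$. If $s'\le t$, then $j'$ runs throughout $[s', s'+p_{i,j'})$, and $s'+p_{i,j'} \ge \tau_{j'} + p_{i,j'} \ge s_{j'}+p_{i,j'} > t$ (using $\theta_{j'}\ge 0$ and $s_{j'} > t - p_{i,j'}$), so $j'$ is being processed at time $t$. Otherwise $s' > t \geq \tau_{j'}$, which forces $s'$ to equal the completion time of $j'$'s predecessor in $\pi$; but then that predecessor is still being processed at time $t$, so $t$ is again non-idle. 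Either way, $\idle(t) = 0$. (Job $j$ itself cannot be running at $t < \tau_j$, since $j$ does not start until at least $\tau_j = \tau \ge t$; the boundary case $t = \tau$ has measure zero under the continuous distribution $\bfTheta$.)

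Combining these two steps yields $\{\idle(t) = 1\} \subseteq \bigcap_{j'\neq j}\{Z_{j'} = 0\}$, and therefore by independence and the inequality $1-x \le e^{-x}$,
\[
\widehat\E[\idle(t)] = \widehat\Pr[\idle(t)=1] \le \prod_{j'\neq j}(1 - \widehat\Pr[Z_{j'}=1]) \le \exp\Bigl(-\sum_{j'\neq j}\widehat\Pr[Z_{j'}=1]\Bigr) = e^{-h(t)}.
\]

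The main obstacle I anticipate is the scheduling claim in the second paragraph, since one must be careful that even though $j'$'s actual start/end times in our schedule can differ from its ``planned'' rectangle $(s_{j'}, s_{j'}+p_{i,j'}]$, the combination of the ordering by $\tau$ and the fact that $j'$'s rectangle overlaps $t$ forces \emph{some} job to occupy time $t$. The rest is a clean application of independence together with $1 - x \le e^{-x}$, relying on Claim~\ref{claim:h-as-most-1} only implicitly (to ensure the exponent is well-controlled).
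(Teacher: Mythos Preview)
Your proof is correct and follows essentially the same route as the paper: the paper introduces an auxiliary ``empty'' indicator $\emp(t)$ (no $j'\in J_i$ with $t\in(\tau_{j'},\tau_{j'}+p_{i,j'}]$), argues $\idle(t)\le\emp(t)$ via the same work-conservation reasoning you give, and then bounds $\widehat\Pr[\emp(t)=1]$ by independence and $1-x\le e^{-x}$, lower-bounding each per-job probability by exactly your $\widehat\Pr[Z_{j'}=1]$. One small slip: the condition $s\lhd_{j'}t$ only gives $s_{j'}\ge t-p_{i,j'}$, not the strict inequality you use, so your chain should read $s'+p_{i,j'}\ge\tau_{j'}+p_{i,j'}> s_{j'}+p_{i,j'}\ge t$ (using $\theta_{j'}>0$ a.s.\ for the strict step), which you already note is a measure-zero issue.
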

\begin{proof}
	We say $t'$ is \emph{empty} if there are no jobs $j' \in J_i$ such that $t' \in (\tau_{j'}, \tau_{j'} + p_{i, j'}]$; let $\emp(t')$ denote the indicator variable that is 1 iff $t'$ is empty. We first observe that if some $t' \in (0, T]$ is not empty, then $t'$ is not idle. This is because a job $j'$ such that $t' \in (\tau_{j'}, \tau_{j'} + p_{i, j'}]$ is not processed at time $t'$ only when other jobs are. Thus, 
	\begin{flalign*}
		 &\;\;\;\;\;\;\widehat E[\idle(t)] \leq \widehat E[\emp(t)] \\
		 &= \prod_{j' \in J \setminus j} \left(1-\widehat\Pr\big[i_{j'} = i, t \in (\tau_{j'}, \tau_{j'} + p_{i, j'} ]\big]\right)
		\leq \exp\Big(-\sum_{j' \in J \setminus j} \widehat\Pr\big[i_{j'} = i, t \in (\tau_{j'}, \tau_{j'} + p_{i, j'} ]\big]\Big)\\
		 &\leq \exp\left(-\hspace{-2.5mm}\sum_{j' \in J \setminus j} \widehat\Pr\left[i_{j'} = i, t \in (s_{j'}, s_{j'} + p_{i, j'} ], \theta_{j'} < \frac{t-s_{j'}}{p_{i, j'}} \right]\right)			
	= \exp\left(-\hspace{-2.5mm}\sum_{j' \in J \setminus j} \sum_{s \lhd_{j'} t} y_{i, j', s} \cdot F\left(\frac{t-s}{p_{i, j'}}\right) \right) \\
		 &= e^{-h(t)}. \hspace*{416pt} \qedhere
	\end{flalign*}		
\end{proof}

\begin{lemma} \label{lemma:bound-idle}
	$\displaystyle \int_0^\tau \widehat \E[\idle(t)] \sfd t \leq \tau - \left(1-\frac1e\right) \int_0^\tau h(t)\sfd t$.
\end{lemma}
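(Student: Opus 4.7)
The plan is to combine the pointwise bound $\widehat\E[\idle(t)] \leq e^{-h(t)}$ from Lemma~\ref{lemma:bound-idle-t} with the a~priori bound $h(t) \in [0,1]$ from Claim~\ref{claim:h-as-most-1}, and then replace the exponential by its chord over $[0,1]$. Since $x \mapsto e^{-x}$ is convex on $[0,1]$, its graph lies on or below the straight line connecting the endpoints $(0, 1)$ and $(1, 1/e)$; that line is exactly $1 - (1-1/e)x$. Concretely, I would state and verify the elementary inequality
\begin{equation*}
e^{-x} \;\leq\; 1 - \Bigl(1 - \tfrac1e\Bigr)x \qquad \text{for every } x \in [0,1],
\end{equation*}
which holds because both sides agree at $x=0$ and $x=1$ while the LHS is convex.

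Next I would apply this inequality pointwise at $x = h(t)$ for each $t \in (0,\tau]$, using Claim~\ref{claim:h-as-most-1} to ensure $h(t) \in [0,1]$. Combined with Lemma~\ref{lemma:bound-idle-t}, this gives
\begin{equation*}
\widehat\E[\idle(t)] \;\leq\; e^{-h(t)} \;\leq\; 1 - \Bigl(1 - \tfrac1e\Bigr) h(t).
\end{equation*}
Integrating both sides from $0$ to $\tau$ and pulling out the constant $(1-1/e)$ yields the claimed inequality
\begin{equation*}
\int_0^\tau \widehat\E[\idle(t)]\, \sfd t \;\leq\; \tau - \Bigl(1 - \tfrac1e\Bigr)\int_0^\tau h(t)\, \sfd t.
\end{equation*}

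There isn't really a hard step here: the whole argument is the observation that the bound $e^{-h}$ on $\widehat\E[\idle(t)]$ can be \emph{linearized} in $h$ via convexity, which is crucial because later steps will want to compare $\int h(t)\,\sfd t$ directly to the LP volume $\int g(t)\,\sfd t$ (or an expression involving it). The only modest subtlety is making sure one uses $h(t) \leq 1$ to justify replacing the exponential by its chord rather than by, say, its tangent line at $0$, which would only give $e^{-h} \leq 1 - h + O(h^2)$ and would not be tight at $h=1$; the chord gives the best linear-in-$h$ upper bound over the whole feasible range $[0,1]$ and is exactly what yields the $(1-1/e)$ saving that drives the sub-$2$ approximation ratio.
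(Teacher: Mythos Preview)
Your proposal is correct and follows essentially the same argument as the paper: apply Lemma~\ref{lemma:bound-idle-t} pointwise, use Claim~\ref{claim:h-as-most-1} to ensure $h(t)\in[0,1]$, linearize $e^{-h(t)}$ via the convexity chord $e^{-x}\le (1-x)e^0 + xe^{-1}=1-(1-1/e)x$, and integrate.
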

\begin{proof}
	By Lemma~\ref{lemma:bound-idle-t}, we have $\int_0^\tau \widehat \E[\idle(t)] \sfd t \leq \int_0^\tau e^{-h(t)}\sfd t$. Notice that $h(t) \in [0, 1]$ for every $t \in [0, 1]$ by Claim~\ref{claim:h-as-most-1}. Thus by the convexity of the function $e^{-x}$, we have that $e^{-h(t)} \leq (1-h(t))e^0 + h(t) e^{-1} = 1 - (1-1/e)h(t)$. Taking the integral from $t = 0$ to $\tau$ gives the lemma. 
\end{proof}
	
	\begin{lemma} \label{lemma:final-approx-ratio}
		Let $\rho = \sup_{\phi \in (0, 1]}\left(F(\phi) - \left(1-\frac1e\right) \int_{0}^{\phi} F(\theta) \sfd \theta\right)/\phi$,  $\beta = \int_{0}^1 f(\theta)\theta\sfd \theta$ and $\alpha = 1 + \max\{\rho, (1+\rho)\beta\}$. Then our algorithm is an $\alpha$-approximation algorithm.
	\end{lemma}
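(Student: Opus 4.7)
The plan is to combine the decomposition (\ref{obs:decompose-C-j}) with Lemmas~\ref{lemma:bound-length} and~\ref{lemma:bound-idle} to get
\[
\widehat\E[C_j] \;\leq\; \tau + p_{i,j} + \int_0^\tau g(t)\,\sfd t - \left(1-\tfrac{1}{e}\right)\int_0^\tau h(t)\,\sfd t,
\]
and then to show that the last two integrals combine into $\rho\tau$ (at most), using the definition of $\rho$. After that, the proof is just a deconditioning argument over $\tau_j$ and then over $(i_j,s_j)$.

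The key step is to rewrite both integrals by exchanging order of summation and integration. For each pair $(j',s)$ with $s < \tau$, set $\phi_{j',s} := \min\{1,(\tau-s)/p_{i,j'}\}$ and substitute $\theta = (t-s)/p_{i,j'}$ inside the $t$-integral. One obtains
\[
\int_0^\tau g(t)\,\sfd t = \sum_{j'\neq j,\, s<\tau} y_{i,j',s}\, p_{i,j'}\, F(\phi_{j',s}),
\qquad
\int_0^\tau h(t)\,\sfd t = \sum_{j'\neq j,\, s<\tau} y_{i,j',s}\, p_{i,j'}\int_0^{\phi_{j',s}} F(\theta)\,\sfd\theta.
\]
Taking the difference and applying the definition of $\rho$ term-by-term gives
\[
\int_0^\tau g(t)\,\sfd t - \left(1-\tfrac1e\right)\int_0^\tau h(t)\,\sfd t \;\leq\; \rho \sum_{j'\neq j,\, s<\tau} y_{i,j',s}\cdot p_{i,j'}\cdot \phi_{j',s} \;=\; \rho\sum_{j'\neq j,\, s<\tau} y_{i,j',s}\cdot\min\{p_{i,j'},\tau-s\}.
\]
The right-hand sum is exactly the total area of portions of rectangles in $\calR_i$ (excluding $j$) lying in $(0,\tau]$, which by Constraint~(\ref{LPC:cover-by-one-interval}) is at most $\tau$. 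Thus $\widehat\E[C_j] \leq (1+\rho)\tau + p_{i,j}$.

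The rest is deconditioning. Conditioned on $i_j=i$ and $s_j=s$, we have $\E[\tau_j \mid i_j=i,s_j=s] = s + \beta p_{i,j}$ since $\tau_j = s + \theta_j p_{i,j}$ and $\E[\theta_j]=\beta$. Averaging the bound above over $\tau$ gives
\[
\E[C_j \mid i_j=i, s_j=s] \;\leq\; (1+\rho)(s+\beta p_{i,j}) + p_{i,j} \;=\; (1+\rho)\,s + \bigl((1+\rho)\beta + 1\bigr)\,p_{i,j}.
\]
By the definition $\alpha = 1+\max\{\rho,(1+\rho)\beta\}$, both coefficients $1+\rho$ and $(1+\rho)\beta+1$ are at most $\alpha$, so $\E[C_j \mid i_j=i, s_j=s] \leq \alpha(s+p_{i,j})$. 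Finally, deconditioning over $(i,s)$ with $\Pr[i_j=i,s_j=s]=y_{i,j,s}$ and summing with weights $w_j$ yields $\E[\sum_j w_j C_j] \leq \alpha \cdot \lpinterval$, proving the lemma.

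The main obstacle is the second paragraph: recognizing that, after changing variables, the combination of $g$- and $h$-integrals factors cleanly through the single-parameter expression $F(\phi)-(1-1/e)\int_0^\phi F$ which appears in the definition of $\rho$, and noticing that the weighted sum of $\min\{p_{i,j'},\tau-s\}$ equals an area inside $\calR_i$ and is therefore bounded by $\tau$ via the LP capacity constraint. Once that packing step is in place, everything else is an averaging calculation.
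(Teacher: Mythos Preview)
Your proof is correct and follows essentially the same approach as the paper: the paper also reduces $\widehat\E[C_j]-\tau-p_{i,j}$ to a sum over pairs $(j',s)$ of $y_{i,j',s}\,p_{i,j'}\bigl(F(\phi)-(1-1/e)\int_0^\phi F\bigr)$, applies the definition of $\rho$, bounds the remaining area by $\tau$, and then deconditions exactly as you do (the paper writes the integrand as $f(\theta)-(1-1/e)F(\theta)$ before integrating, which is the same thing). The only cosmetic difference is that you evaluate $\int_0^\tau g$ and $\int_0^\tau h$ separately before combining, while the paper combines first and then integrates.
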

	
To prove Lemma~\ref{lemma:final-approx-ratio}, we first upper bound $\widehat\E[C_j]$ in terms of $\tau = \tau_j$ and $p_{i,j}$, then obtain an upper bound on $\E[C_j]$ by deconditioning. 

\begin{lemma}
	\label{claim:n1}
	$\widehat\E[C_j]  \leq (1+\rho)\tau + p_{i,j}.$
\end{lemma}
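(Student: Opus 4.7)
The plan is to start from the decomposition in equation~\eqref{obs:decompose-C-j}. Taking conditional expectation $\widehat\E$ and applying Lemma~\ref{lemma:bound-length} to the first summand and Lemma~\ref{lemma:bound-idle} to the second, I get
\[
\widehat\E[C_j] \;\leq\; \int_0^\tau g(t)\sfd t \;+\; \tau \;-\; \left(1-\frac{1}{e}\right)\int_0^\tau h(t)\sfd t \;+\; p_{i,j}.
\]
Hence it suffices to establish the inequality
\[
\int_0^\tau g(t)\sfd t \;-\; \left(1-\frac{1}{e}\right)\int_0^\tau h(t)\sfd t \;\leq\; \rho\cdot\tau. \qquad (\star)
\]

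To prove $(\star)$, I rewrite both integrals as sums over the rectangles in $\calR_i$ (excluding $j$). For every pair $(j',s)$ with $y_{i,j',s}>0$, define $\phi_{j',s} := \min\{1,(\tau-s)/p_{i,j'}\}$, the fraction of the rectangle $R_{i,j',s}$ lying to the left of $\tau$; note that $\phi_{j',s}>0$ exactly when $s<\tau$. Using the substitution $\theta=(t-s)/p_{i,j'}$ inside each summand defining $g$ and $h$, a routine computation yields
\[
\int_0^\tau g(t)\sfd t \;=\hspace{-2mm} \sum_{j'\neq j,\;s<\tau}\hspace{-2mm} y_{i,j',s}\,p_{i,j'}\,F(\phi_{j',s}), \qquad \int_0^\tau h(t)\sfd t \;=\hspace{-2mm} \sum_{j'\neq j,\;s<\tau}\hspace{-2mm} y_{i,j',s}\,p_{i,j'}\int_0^{\phi_{j',s}}\!\!\!F(\theta)\sfd\theta.
\]
Now I apply the defining property of $\rho$ to each rectangle separately: for every $\phi_{j',s}\in(0,1]$, $F(\phi_{j',s})-(1-1/e)\int_0^{\phi_{j',s}}F(\theta)\sfd\theta \leq \rho\cdot\phi_{j',s}$. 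Subtracting the two displays above term by term and invoking this bound gives
\[
\text{LHS of }(\star) \;\leq\; \rho \sum_{j'\neq j,\;s<\tau} y_{i,j',s}\,p_{i,j'}\,\phi_{j',s} \;=\; \rho \sum_{j'\neq j,\;s<\tau} y_{i,j',s}\cdot\min\{p_{i,j'},\,\tau-s\}.
\]
The final sum is precisely the total area of the portions of rectangles in $\calR_i$ (other than those associated with $j$) lying in $(0,\tau]$, which is at most $\tau$ by constraint~\eqref{LPC:cover-by-one-interval} (the total height at every time point is at most $1$). This establishes $(\star)$ and the lemma follows by combining with the inequality for $\widehat\E[C_j]$ above.

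There is no serious technical obstacle here since the heavy lifting was already done in Lemmas~\ref{lemma:bound-length} and~\ref{lemma:bound-idle}. The one conceptual point worth emphasizing is that $\rho$ is defined as a supremum over $\phi\in(0,1]$ of a particular ratio, so it must be applied \emph{per rectangle} rather than directly to the full integrals; this per-rectangle application is exactly what decouples $g$ and $h$ and allows the LP packing constraint to collapse the weighted sum $\sum y_{i,j',s}\,p_{i,j'}\,\phi_{j',s}$ into the clean bound $\tau$. Rectangles that straddle $\tau$ are handled automatically by the $\min$ inside the definition of $\phi_{j',s}$.
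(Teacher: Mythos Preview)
Your proof is correct and follows essentially the same approach as the paper: you apply Lemmas~\ref{lemma:bound-length} and~\ref{lemma:bound-idle} to the decomposition~\eqref{obs:decompose-C-j}, swap sum and integral, change variables to $\theta=(t-s)/p_{i,j'}$, invoke the definition of $\rho$ per rectangle, and then bound the remaining sum by $\tau$ via the area/packing argument. The only cosmetic difference is that you compute the two integrals $\int_0^\tau g$ and $\int_0^\tau h$ separately before subtracting, whereas the paper keeps the integrand $f(\theta)-(1-1/e)F(\theta)$ together throughout; these are trivially equivalent since $\int_0^{\phi}f=F(\phi)$.
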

\begin{proof}
		By applying the bounds in Lemmas~\ref{lemma:bound-length} and \ref{lemma:bound-idle} to Eq.\ (\ref{obs:decompose-C-j}),  we have
		\begin{align}
			&\quad \widehat\E[C_j] - \tau - p_{i,j} \leq \int_0^\tau g(t)\sfd t- \left(1-\frac1e\right) \int_0^\tau h(t) \sfd t   \nonumber \\
			&= \int_0^\tau \sum_{j' \neq j, s\lhd_{j'} t }y_{i, j', s}\left(f\left(\frac{t-s}{p_{i,j'}}\right)-\left(1-\frac1e\right)F\left(\frac{t-s}{p_{i,j'}}\right)\right)\sfd t \nonumber\\
			&=  \sum_{j' \neq j, s \in [0, \tau)}y_{i, j', s}\cdot\int_s^{\min\{\tau, s+p_{i,j'}\}} \left(f\left(\frac{t-s}{p_{i,j'}}\right)-\left(1-\frac1e\right)F\left(\frac{t-s}{p_{i,j'}}\right)\right) \sfd t \nonumber\\
			&=  \sum_{j' \neq j, s \in [0, \tau)}y_{i, j', s}\cdot p_{i,j'}\cdot \int_0^{\min\{(\tau-s)/p_{i,j'}, 1\}} \left(f\left(\theta\right)-\left(1-\frac1e\right)F\left(\theta\right)\right) \sfd \theta. \nonumber
		\end{align}
		By the definition of $\rho$ and that $\int_0^\phi f(\theta)\sfd \theta = F(\phi)$ for $\phi \in [0, 1]$, we have
		\begin{align*}
			\widehat\E[C_j] &\leq \sum_{j' \neq j, s \in [0, \tau)} y_{i, j', s} \cdot p_{i, j'} \cdot \rho \cdot \min \{(\tau - s)/p_{i, j}, 1\}  + \tau + p_{i,j}\\
			&= \rho \sum_{j' \neq j, s \in [0, \tau)} y_{i, j', s} \cdot \min\{\tau -s, p_{i, j'}\} + \tau + p_{i,j} 
			\leq \rho \tau + \tau + p_{i,j} = (1+\rho)\tau + p_{i,j},
		\end{align*}
		where the last inequality holds because the sum is the total area of the portions of rectangles in $\calR_i$ before time $\tau$.
\end{proof}

\begin{lemma}
	\label{claim:n2}
	$\E[C_j]  \leq \alpha \sum_{i \in M, s \in [0, T)}y_{i, j, s}(s + p_{i,j}).$
\end{lemma}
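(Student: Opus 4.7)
The plan is to decondition the bound from Lemma~\ref{claim:n1} in two stages: first average over $\tau_j$ given $(i_j, s_j) = (i, s)$, and then average over the pair $(i_j, s_j)$ against the distribution induced by the LP.

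For the first stage, I would use that, conditioned on $i_j = i$ and $s_j = s$, the pseudo-arrival time is $\tau_j = s + \theta_j \cdot p_{i,j}$ where $\theta_j$ is drawn independently from $\bfTheta$. Hence $\E[\tau_j \mid i_j = i, s_j = s] = s + \beta p_{i,j}$, where $\beta = \int_0^1 f(\theta)\theta\,\sfd\theta$ as in the statement of Lemma~\ref{lemma:final-approx-ratio}. Applying Lemma~\ref{claim:n1} (which gives a bound linear in $\tau$, so it survives taking expectations) yields
\[
\E[C_j \mid i_j = i, s_j = s] \le (1+\rho)\bigl(s + \beta p_{i,j}\bigr) + p_{i,j} = (1+\rho)\,s + \bigl(1 + (1+\rho)\beta\bigr)\,p_{i,j}.
\]

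The second stage is to combine the two coefficients with the definition $\alpha = 1 + \max\{\rho,(1+\rho)\beta\}$. By construction $\alpha \ge 1+\rho$ and $\alpha \ge 1 + (1+\rho)\beta$, so the displayed inequality collapses to $\E[C_j \mid i_j = i, s_j = s] \le \alpha(s + p_{i,j})$. Then I would decondition using the fact that the joint distribution of $(i_j, s_j)$ is exactly $\Pr[i_j = i, s_j = s] = y_{i,j,s}$ by construction of the rounding (and by Constraint~\eqref{lpi-1}), giving
\[
\E[C_j] = \sum_{i \in M,\, s \in [0,T)} y_{i,j,s}\cdot \E[C_j \mid i_j = i, s_j = s] \le \alpha \sum_{i \in M,\, s \in [0,T)} y_{i,j,s}(s + p_{i,j}),
\]
which is the claimed bound. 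Summing over $j \in J$ with weights $w_j$ and comparing to the $\lpinterval$ objective then yields the $\alpha$-approximation claimed in Lemma~\ref{lemma:final-approx-ratio}.

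There is essentially no conceptual obstacle here; the work was done in Lemma~\ref{claim:n1}. The only thing worth checking carefully is that the split of $\alpha$ into the two $\max$-terms exactly matches the two coefficients produced by deconditioning: the $s$-coefficient $1+\rho$ from the $\tau$-term of Lemma~\ref{claim:n1}, and the $p_{i,j}$-coefficient $1 + (1+\rho)\beta$ arising because $\E[\theta_j] = \beta$ gives the $\beta p_{i,j}$ contribution inside the $(1+\rho)\tau$ term plus the standalone $p_{i,j}$ from the job's own processing time. This is precisely why $\alpha$ is defined as $1 + \max\{\rho,(1+\rho)\beta\}$ rather than, say, the sum of the two quantities.
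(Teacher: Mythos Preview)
Your proposal is correct and follows essentially the same approach as the paper: decondition Lemma~\ref{claim:n1} first over $\theta_j$ (using $\E[\theta_j]=\beta$ and linearity of the bound in $\tau$) and then over $(i_j,s_j)$ via $\Pr[i_j=i,s_j=s]=y_{i,j,s}$, and finally absorb the two coefficients $(1+\rho)$ and $1+(1+\rho)\beta$ into $\alpha$. The paper writes out the integral over $\theta$ explicitly rather than invoking linearity, but the computation is identical.
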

\begin{proof}
		Now, we consider all machines $i \in M$. Then $\E[C_j]$ equals to
		\begin{align*}
			&\quad \sum_{i \in M, s \in [0, T)}\Pr[i_j = i, s_j = s]\int_0^1 f(\theta)\E[C_j|i_j = i, s_j = s, \tau_j = s + \theta p_{i, j}]\sfd \theta\\
			 &\leq \sum_{i \in M, s \in [0, T)}y_{i,j,s}\int_0^1 f(\theta)\left((1+\rho)(s+\theta p_{i,j})+p_{i,j}\right) \sfd \theta  \hspace{135pt} \mbox{[By Lemma~\ref{claim:n1}]}\\
			 &=\sum_{i \in M, s \in [0, T)}y_{i, j, s} \left(\int_0^1f(\theta)((1+\rho)s+p_{i,j})\sfd \theta + \int_0^1f(\theta)(1+\rho)\theta p_{i,j}\sfd \theta \right)
		\end{align*}		
		\begin{align*}	 
			&=\sum_{i \in M, s \in [0, T)}y_{i, j, s}\left((1+\rho)s+p_{i,j}+(1+\rho)\beta p_{i,j}\right) \\
			&\leq \sum_{i \in M, s \in [0, T)}y_{i, j, s}\max\{1+\rho, 1+(1+\rho)\beta\} (s + p_{i,j}) 
			= \alpha \sum_{i \in M, s \in [0, T)}y_{i, j, s}(s + p_{i,j}).  \hspace*{65pt} \qedhere
		\end{align*}
\end{proof}

We are now ready to complete the proof of Lemma~\ref{lemma:final-approx-ratio}. Summing up $\E[w_j C_j]$ over all jobs $j \in J$, we have \vspace{-3mm}
		\begin{align*}
			\E\Big[\sum_{j \in J}w_jC_j\Big] \leq \alpha \sum_{i \in M, j \in J, s \in [0, T)}w_j y_{i,j,s}(s+p_{i,j}).
		\end{align*}
		Notice that the right-hand-side is exactly $\alpha$ times the cost of the LP solution.  Thus, our algorithm is an $\alpha$-approximation.

\medskip
	To complete the proof of Theorem~\ref{thm:non-preemptive}, we only need to find a distribution $\bfTheta$ whose $\alpha$ value is no greater than the approximation ratio claimed in the theorem. We note that we first used a factor revealing LP to find out the best distribution that minimizes $\alpha$. Then we discovered a truncated quadratic function is the best fit for the obtained discretized PDF. To find the best coefficients, we ran another program and obtained a distribution that yields a slightly better approximation ratio than one we could using the factor revealing LP. We set the PDF $f$ as follows:
	\begin{equation}
		f(\theta) = \begin{cases}
				0.1702 \theta^2 + 0.5768 \theta + 0.8746	&\mbox{if} \;0 \leq \theta \leq 0.85897 \\
				0 			&\text{otherwise}
			\end{cases}.
	\end{equation}
	
	Notice that $f(\theta)$ increases as $\theta$ goes from $0$ to $0.85897$ and becomes $0$ when $\theta > 0.85897$. This is consistent with the previous discussion that we shift the probability mass from both ends to the middle.  Then, by easy calculation one can show that $\beta  < 0.46767$ and $\rho <  0.8785$. Thus $\alpha = 1+\max\set{\rho, (1+\rho)\beta} < 1.8786$. Details on this calculation can be found in Appendix~\ref{sec:alpha-calc}.

	\newcommand{\cA}{\mathcal{A}}

\section{Preemptive Scheduling}
	\label{sec:preemptive}

	This section is devoted to proving Theorem~\ref{thm:preemptive}, which claims a better than 2-approximation for the preemptive case. Note that migration is not allowed, i.e. each job must be processed on only one of the machines. In the preemptive setting, a job's processing may be interrupted, so we need to choose $p_{i,j}$ unit-length time slots on machine $i$ to schedule job $j$ on machine $i$.  This motivates the following definition. 
	
	\begin{definition}[Chains] \label{def:chain}
	 A chain $A$ for job $j \in J$ on machine $i \in M$ is a sequence $(t_1, t_2, \cdots, t_{p_{i, j}})$ of integers such that $r_j < t_1 < t_2 < \cdots < t_{p_{i,j}} \leq T$.  Equivalently, we may view  $A$ as the set $\{t_1, t_2, \cdots, t_{p_{i,j}} \}$, or as a function from $(0, p_{i,j}]$ to $(0, T]$ such that $A(\vartheta) = t_{\ceil{\vartheta}} + \vartheta-\ceil{\vartheta}$ for all $\vartheta \in (0, p_{i, j}]$. For all $t \in (0, T]$, let $A^{-1}(t) = \sup\{\vartheta \in (0, p_{i,j}]:A(\vartheta) \leq t\}$.
	\end{definition}
	
	A chain $A = (t_1, t_2, \cdots, t_{p_{i, j}})$ completely describes $j$'s schedule on machine $i$: we schedule $j$ on slots $(t_1-1, t_1], (t_2-1, t_2], \cdots, (t_{p_{i,j}}-1, t_{p_{i,j}}]$. Thus, $A(\vartheta)$ is the time at which we have run $j$ for $\vartheta$ units of time. In particular, $A(p_{i,j})$ is the completion time of $j$. 
	We may use $C_A:= A(p_{i,j})$ to denote $j$'s completion time under the schedule $A$ of job $j$. 	Notice that  $A^{-1}(t)$ is the amount of time in which $j$ is processed before $t$ in $A$. Let $\calA^{i,j}$ denote the set of all chains for job $j$ on machine $i$.  

\vspace{-2mm}	
\paragraph{Linear Programming.} 	We are now ready to present our LP using  the notion of chains.  For notational convenience, when we refer to a chain $A$, we assume it is associated with a machine $i$ and a job $j$ satisfying $A \in \calA^{i, j}$.
	\begin{equation}
		\min \qquad \sum_{i \in M, j \in J} \sum_{A \in \cA^{i,j}} w_j C_A  \cdot z_A  \tag{$\lpchain$} 		
	\end{equation}\vspace{-22pt}
$\qquad\qquad\qquad \text{s.t} $	
	\begin{alignat}{2}
		\sum_{i \in M} \sum_{A \in \cA^{i,j}} z_A &\geq 1        &&\qquad \forall j \in J \label{lpc-1}\\
		\sum_{j \in J} \sum_{A \in \cA^{i,j} : t \in A} z_A &\leq 1        &&\qquad \forall i \in M, t \in [T] \label{lpc-2}\\
		z_A &\geq 0 								&&\qquad \forall i \in M, j \in J, A \in \cA^{i,j}  \nonumber
	\end{alignat}

	To see $\lpchain$ is a valid relaxation, assume that variables can only take integer values. In $\lpchain$ we have an indicator  variable $z_A$ for every possible chain $A \in \cA^{i,j}$ for all $i$ and $j$, which is 1 if and only if $j$ is scheduled following the chain description $A$. The first constraint requires that every job must complete; note that we do not need equality here since the optimal solution will satisfy equality. It is also worth mentioning that job $j$ never gets processed before its arrival time $r_j$ since $j$'s chains don't allow it. Finally, the second constraint ensures that every machine is used by at most one job at any point in time -- there is at most one chain that schedules a job at any time. Thus we get a valid LP relaxation by allowing variables to have fractional values. 
	
	Although the LP has exponentially many variables, we can solve it using standard techniques -- we solve the dual using the Ellipsoid method with a separation oracle. To keep the flow of presentation, details are deferred to Section~\ref{sec:solvelp}.
	
\vspace{-2mm}
\paragraph{Algorithm.}  Our rounding is a natural generalization of the rounding for non-preemptive scheduling. To see this, suppose that a chain $A \in \calA^{i,j}$ is a sequence of $p_{i, j}$ consecutive integers. Then $A$ corresponds to an interval. If every chain in the support of $z$ corresponds to an interval, then the fractional solution is a valid solution to $\lpinterval$ for non-preemptive scheduling. In this scenario, our rounding works exactly in the same way as that for non-preemptive scheduling. Thus, we can generalize the former rounding by generalizing intervals to chains. 
	
	More specifically,  our rounding algorithm works as follows.   Let $\bfTheta$ be some distribution over $[0, 1]$. For every $j \in J$, we randomly and independently choose a pair $(i_j, A_j)$ such that $\Pr[(i_j, A_j) = (i, A)] = z_A$ for every $i \in M, A \in \calA^{i,j}$.   As $\sum_{i \in M, A \in \calA^{i,j}}z_A = 1$ for every $j$, the random procedure is well-defined. For each $j$, we randomly and independently choose a number $\theta_j$ from $\bfTheta$. Let $\tau_j = A_j(\theta_j\cdot p_{i_j, j})$. We assume that all jobs have different $\tau_j$ values since the event happens almost surely.  As in the algorithm for the non-preemptive scheduling, we let $J_i = \{j\in J:i_j = i \}$ and schedule all jobs in $J_i$ on machine $i$ in increasing order of $\tau_j$.  We schedule the jobs as early as possible, maintaining the property that job $j$ starts no earlier than $\tau_j$.  Notice that the schedule our algorithm constructed is non-preemptive, even though the problem allows preemption.
	
\vspace{-2mm}
\paragraph{Overview of the Analysis.} The analysis is more involved than the one for the non-preemptive case. To see this, let's recall how we gave a better than 2-approximation for the non-preemptive case. We can still break down a job's completion time as in Eq.~\eqref{obs:decompose-C-j} where job $j$'s completion time is decomposed into three quantities: total volume of jobs with smaller $\tau$ values, total length of idle times before $\tau_j$, and the size of job $j$ itself. As we observed, if we use a uniform distribution for $\bfTheta$, it is easy to get a 2-approximation by showing that both quantities are bounded by $\tau_j$, which is $j$'s starting time plus half of its size in expectation. Then, by using a non-uniform distribution $\bfTheta$ with more mass around the center, we could have the following benefits: (i) if a job $j' \neq j$ is processed a little before $\tau_j$, it is less likely to have a smaller $\tau$ value; and (ii) otherwise, a considerable fraction of job $j'$ is processed before $\tau_j$, thus contributes to reducing the number of idle times. Then, using the non-preemptive structure of the schedule, we were able to 
analyze each time's contribution to the first and second quantities in Eq.~\eqref{obs:decompose-C-j}. 

While the high-level idea is the same, we have to take a different analysis route for the preemptive case since each job's schedule is scattered over time, which keeps us from defining $h$. Note that many jobs may contribute to making a time $t$ busy since we don't have a nice structural property given by the intervals but not by the chains.
In particular, when a lot of jobs are partially processed around time $t$, the time will highly likely to become non-idle. This create an issue for the analysis since we don't get enough idle times compared to the volume of jobs we used. 

Hence we have to bound $C_j$ by taking a more global view of the schedule. In the analysis, we will consider two cases. Let $W$ denote the volume of work done by LP before $\tau_j$. If $W$ mostly comes from jobs that are processed very little before $\tau_j$, we can reduce the first quantity in \eqref{obs:decompose-C-j} using the non-uniform distribution. Otherwise, we can show that a large fraction of $W$ comes from jobs that are processed a lot by the LP by time $(9/10) \tau_j$. Then, either a lot of jobs complete by time $\tau_j$ or the entire interval $[(9/10) \tau_j, \tau_j]$ becomes non-idle. In either case, we can have a better  bound on the second quantity in Eq.~\eqref{obs:decompose-C-j} than the trivial $\tau_j$. Somewhat subtle definitions are needed for the analysis, but this is a high-level overview.

\subsection{Solving the LP}
	\label{sec:solvelp}

\smallskip
We first assume that $T$ is polynomially bounded and discuss later how to handle large $T$. The dual of $\lpchain$ is as follows. 

\begin{equation}
		\max \qquad \sum_{j \in J} \eta_j - \sum_{i \in M, t} \xi_{i,t} \tag{$\lpdual$} 		
\end{equation}\vspace{-22pt}
$\qquad\qquad\qquad \text{s.t} $	
	\begin{alignat}{2}
		\eta_j - \sum_{t \in A}  \xi_{i,t} 	&\leq w_j C_A        &&\qquad \forall i,j,  A \in \cA^{i,j} \label{lpd-1} \\
		\eta_j 					&\geq 0 			&& \qquad \forall j \in J \nonumber \\
		\xi_{i,t} 						&\geq 0 			&& \qquad \forall i \in M, t \in [T] \nonumber
	\end{alignat}

Note that $\lpdual$ has polynomially many variables, but exponentially many constraints. To solve the dual, we use the Ellipsoid method. Fortunately, there is a very simple separation oracle. Fix $i$ and $j$, and $C \in [r_j+p_{i,j}, T]$. Our goal is to find $A$ with $C = C_A$ for which Constraint (\ref{lpd-1}) is violated if such $A$ exists. Since the right-hand-side $w_j C_A$ and $\eta_j$ are fixed, it suffices to find a chain $A \in \cA^{i,j}$ that minimizes $\sum_{t \in A}  \xi_{i,t}$ and completes job $j$ exactly at time $C$. Thus, 
we only need to consider the set $A$ consisting of $p_{i,j} - 1$ slots $(t-1, t]$ in $(r_j, C-1]$ with the smallest $\xi_{i,t}$ values, and the slot $(C-1,C]$. If Constraint (\ref{lpd-1}) is violated for this $A$, we found a violated constraint. Otherwise, all the  constraints are satisfied for the fixed $i, j$ and $C$.

Using the Ellipsoid method with the above separation oracle, we can obtain a basic optimal solution of $\lpdual$, in which the number of tight constraints is bounded by the number of variables. Since $\lpdual$ has polynomially many variables, the basic solution makes only polynomially many constraints tight. Due to the strong duality, there is an optimal solution to the $\lpchain$ where all variables corresponding to dual constraints that are not tight are zero. Hence we can obtain an optimal solution to $\lpchain$ with a poly-sized support. 

\medskip
	We now extend this argument to large $T$. Let $R := \{ r_j \; | \; j \in J\}$ be the set of all jobs' arrival times. We add to $R$, exponentially increasing time steps, i.e. $\lceil (1+\eps)^k \rceil$ for all integers $0 \leq k \leq K$ where $K$ is the smallest $k$ such that $\lceil (1+\eps)^k \rceil \geq T$. If $R = \{t_0, t_1, t_2, ... \}$, we break $(0, T]$ into intervals $I_1 = (t_0, t_1], I_2 = (t_1, t_2], \cdots$.	Note that jobs can arrive only at the beginning of the intervals. 	Also note that there are polynomially many intervals. 
	For each chain $A \in \calA^{i, j}$, $C_A$ is defined slightly differently from before: $C_A$ is $t_{k}$ where $k$ is the largest $k'$ such that $A$ takes some time slots from $I_{k'}$.   Notice that we have $A^{-1}(p_{i, j}) \leq  C_A \leq (1+\eps)A^{-1}(p_{i,j})$. That is, $C_A$ $(1+\epsilon)$-approximates the actual completion time of $j$ when it is scheduled following $A$. Constraint~(\ref{lpc-2}) is changed to 
	$$\sum_{j \in J} \sum_{A \in \cA^{i,j}} | I_k \cap A| z_A \leq t_k-t_{k-1}   \qquad \forall i \in M, k$$

The dual is changed to, 
\begin{equation*}
		\max \qquad \sum_{j \in J} \eta_j - \sum_{i \in M, k} (t_k - t_{k-1}) \xi_{i,k} 
\end{equation*}\vspace{-22pt}
$\qquad\qquad\qquad \text{s.t} $	
	\begin{alignat*}{2}
		\eta_j - \sum_{k}  |A \cap I_k| \xi_{i,k} 	&\leq w_j C_A        &&\qquad \forall i,j,  A \in \cA^{i,j} \label{lpd-1} 
	\end{alignat*}

The separation oracle is almost the same. The only change is that we are allowed to pick up to $t_k-t_{k-1}$ time slots from $I_k$ to find $A$ that is most likely to violate the constraint. All the remaining procedure is identical.

\subsection{Analysis}
	\label{sec:n-analysis}

	We fix $j \in J, i \in M$ and $\tau \in (0, T]$ and condition on the event that $i_j = i$ and $\tau_j = \tau$. Using the same notations as before, let $\widehat{\E}[\cdot]$ denote $\E[\cdot|i_j = i, \tau_j = \tau]$ and $\widehat{\Pr}[\cdot]$ denote $\Pr[\cdot|i_j = i, \tau_j = \tau]$.  For any $t \in (0, T]$, we say $t$ is idle if there are no jobs scheduled at time $t$ on machine $i$, and use $\idle(t)$ to indicate whether $t$ is idle or not. We will again use Eq.~\eqref{obs:decompose-C-j} for the analysis of $\E C_j$. 
	
	We do not try to optimize the approximation ratio. Rather we will use a distribution $\bfTheta$ that is very close to the uniform distribution to make the analysis more transparent. The probability density function (PDF) of $\bfTheta$ is $f(\theta)  = 1/(1-2\lambda)$ if $\theta \in (\lambda, 1-\lambda)$ and $f(\theta) = 0$ otherwise, where $\lambda \in (0, 0.005)$ is some constant to be decided later. Let $F(t) = \int_{0}^t f(\theta)\sfd \theta$ be its cumulative distribution function (CDF).  Note that this is a uniform distribution with small portion of both ends clipped out.  It is not hard to show that if $\lambda = 0$ then we can still obtain a 2-approximation. The following claims easily follow from elementary algebra. 
	
\begin{claim}
	\label{claim:npdf-1}
		For any $\theta, \theta'$ such that $\lambda \leq \theta \leq \theta' \leq 1$, we have $\frac{F(\theta)}{\theta} \leq \frac{\theta' - \lambda}{\theta' ( 1- 2 \lambda)}$.
\end{claim}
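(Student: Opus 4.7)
The plan is to treat this as a direct elementary calculation, with the only subtlety being the shape of $F$, which is piecewise linear with a flat top at $1$. The cleanest approach is to clear denominators in the desired inequality, which all sit in the positive range, and reduce it to the algebraic inequality
\[
	\theta'(1-2\lambda)\,F(\theta) \;\leq\; \theta(\theta' - \lambda),
\]
and then split on whether $\theta$ is in the ``linear'' region $[\lambda, 1-\lambda]$ or the ``saturated'' region $(1-\lambda, 1]$ of the CDF.

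In the first case, $\theta \leq 1-\lambda$, I would use $F(\theta) = (\theta-\lambda)/(1-2\lambda)$. Substituting, the desired inequality becomes $\theta'(\theta - \lambda) \leq \theta(\theta' - \lambda)$; expanding both sides and cancelling $\theta\theta'$ gives $-\lambda\theta' \leq -\lambda\theta$, i.e.\ $\lambda\theta \leq \lambda\theta'$, which is immediate from $\theta \leq \theta'$ and $\lambda \geq 0$. In the second case, $\theta \in (1-\lambda, 1]$ (which also forces $\theta' \in (1-\lambda, 1]$), I would use $F(\theta) = 1$, reducing the goal to $\theta'(1-2\lambda) \leq \theta(\theta' - \lambda)$. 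Rearranging yields $\theta'(\theta + 2\lambda - 1) \geq \lambda\theta$. Since $\theta \geq 1-\lambda$ gives $\theta + 2\lambda - 1 \geq \lambda$, and $\theta' \geq \theta$, we get $\theta'(\theta + 2\lambda - 1) \geq \theta' \lambda \geq \theta\lambda$, which closes the case.

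An alternative, more conceptual route would observe that on $[\lambda, 1-\lambda]$ the function $\theta \mapsto F(\theta)/\theta = (\theta-\lambda)/(\theta(1-2\lambda))$ has derivative $\lambda/(\theta^2(1-2\lambda)) \geq 0$, so it is non-decreasing; moreover the right-hand side of the claim, viewed as a function of $\theta'$, is exactly the same expression, also non-decreasing. For $\theta, \theta'$ both in $[\lambda, 1-\lambda]$ the claim is then just monotonicity, and the boundary case where $\theta$ or $\theta'$ lies in $(1-\lambda, 1]$ is handled by a short direct check as above. I do not expect any real obstacle here: the truncation at $1-\lambda$ is the only thing that prevents one-line monotonicity, and the case split dispatches it cleanly.
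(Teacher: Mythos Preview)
Your proof is correct. The paper does not actually give a proof of this claim, stating only that it ``easily follow[s] from elementary algebra''; your case split on whether $\theta$ lies in the linear region $[\lambda,1-\lambda]$ or the saturated region $(1-\lambda,1]$ of $F$, together with the clearing of denominators, is exactly the kind of elementary verification the paper has in mind.
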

\begin{claim}
	\label{claim:npdf-2}
		For any $\theta, \theta'$ such that $\lambda \leq \theta' \leq 1/2$ and $\theta' \leq \theta \leq 1$, we have $\frac{F(\theta)}{\theta} \geq \frac{\theta' - \lambda}{\theta' ( 1- 2 \lambda)}$.
\end{claim}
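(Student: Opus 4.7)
My plan is to show that the function $\phi(\theta) := F(\theta)/\theta$ attains its minimum over the interval $[\theta', 1]$ at one of the two endpoints, and that the hypothesis $\theta' \leq 1/2$ forces the minimum to occur at $\theta = \theta'$. The value $\phi(\theta')$ is exactly the right-hand side of the claimed inequality, so this immediately gives the bound.

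First I would split the range of $\theta$ according to the piecewise definition of $F$. Since $f$ is the constant $1/(1-2\lambda)$ on $(\lambda, 1-\lambda)$ and zero outside, we have $F(\theta) = 0$ on $[0,\lambda]$, $F(\theta) = (\theta-\lambda)/(1-2\lambda)$ on $[\lambda, 1-\lambda]$, and $F(\theta) = 1$ on $[1-\lambda, 1]$. The hypothesis $\lambda \leq \theta' \leq 1/2$ together with $\lambda < 0.005$ places $\theta'$ in $[\lambda, 1-\lambda]$, so only the middle and right pieces are relevant for $\theta \in [\theta', 1]$.

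On $[\lambda, 1-\lambda]$ I can rewrite $\phi(\theta) = (1 - \lambda/\theta)/(1-2\lambda)$, whose derivative $\lambda/(\theta^2(1-2\lambda))$ is strictly positive; hence $\phi$ is strictly increasing on this sub-interval. On $[1-\lambda, 1]$, $\phi(\theta) = 1/\theta$ is strictly decreasing. Consequently $\phi$ is unimodal on $[\lambda,1]$ with peak at $\theta = 1-\lambda$, so the minimum of $\phi$ over any sub-interval is attained at one of the endpoints.

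It then remains to compare $\phi(\theta') = (\theta'-\lambda)/(\theta'(1-2\lambda))$ with $\phi(1) = 1$. A direct algebraic manipulation shows $\phi(\theta') \leq 1$ iff $\theta' - \lambda \leq \theta'(1-2\lambda)$ iff $\theta' \leq 1/2$, which is precisely the hypothesis. Hence $\phi(\theta) \geq \phi(\theta')$ for every $\theta \in [\theta', 1]$, which is the inequality to be proved. I do not expect any serious obstacle here; the only subtlety is recognizing that the function is not monotone on the whole range $[\theta',1]$ (the right tail where $F$ saturates at $1$ makes $\phi$ decrease), so the role of the hypothesis $\theta' \leq 1/2$ is to rule out the right endpoint as the minimizer.
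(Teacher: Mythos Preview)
Your argument is correct; the paper does not actually write out a proof of this claim, stating only that it ``easily follows from elementary algebra,'' and your unimodality analysis of $F(\theta)/\theta$ together with the endpoint comparison is precisely the kind of elementary verification the authors have in mind.
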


	\newcommand{\sfl}{{\mathsf{l}}}
	\newcommand{\sfh}{{\mathsf{h}}}
	\newcommand{\sfg}{{\mathsf{g}}}
	
	We start by defining heavy and light chains. Roughly speaking, a chain $A \in \cA^{i,j}$ is said to be heavy if a considerable fraction of the corresponding job $j$ is processed before $\tau_j=\tau$, otherwise light. 
	
	\begin{definition}
		Given a chain $A \in \calA^{i, j'}$ for some job $j' \neq j$, we say $A$ is heavy if $A^{-1}(\tau) \geq p_{i, j'}/15$ and light otherwise. Let $\calA^{i, j'}_\sfh$ and $\calA^{i, j'}_\sfl$ be the sets of heavy and light chains in $\calA^{i, j'}$, respectively.  
	\end{definition}
	
	Let $W_\sfh = \sum_{j' \neq j, A \in \calA^{i, j'}_\sfh} z_A A^{-1}(\tau)$ and $W_\sfl = \sum_{j' \neq j, A \in \calA^{i, j'}_\sfl} z_A A^{-1}(\tau)$. Let $W = W_\sfh + W_\sfl = \sum_{j' \neq j, A \in \calA^{i, j'}} z_A A^{-1}(\tau)$.  Then, $W$ is the total area of the portions of the rectangle chains before $\tau$; here we view $z_A$ fraction of chain $A$ as a chain of rectangles with height $z_A$ on times in $A$. In the light of this view, we immediately have $W \leq \tau$. 

\smallskip
We continue our analysis by considering two cases depending on how much light/heavy chains contribute to $W$.

	\subsubsection{Case 1: $W_\sfl \geq W/3$.}  In this case, we focus on the expected total length of jobs scheduled on machine $i$ before $j$.  For a light chain, a large portion is after $\tau$.  Since the non-uniform distribution $\Theta$ moves the mass to the middle, it will give smaller expected total length if many chains are light. In the following, the first inequality is due to Claim~\ref{claim:npdf-1} with $\theta' = 1/ 15$ and $\theta = \frac{A^{-1}(\tau)}{p_{i,j'}}$.
	\begin{align*}
		&\quad \widehat\E\left[\sum_{j' \in J_i: \tau_{j'} < \tau}p_{i,j'}\right] = \sum_{j' \neq j, A \in \calA^{i,j'}}z_A F\left(\frac{A^{-1}(\tau)}{p_{i,j'}}\right)p_{i, j'} \\
		&\leq \frac{1}{1-2\lambda}\sum_{j' \neq j, A \in \calA^{i,j'}_\sfh}z_A \left(\frac{A^{-1}(\tau)}{p_{i,j'}}\right) p_{i,j'} +  \frac{(1/{15}-\lambda)}{(1/{15})(1-2\lambda)}\sum_{j' \neq j, A \in \calA^{i, j'}_\sfl}z_A  \left(\frac{A^{-1}(\tau)}{p_{i,j'}}\right) p_{i,j'}\\
		&= \frac{1}{1-2\lambda} \sum_{j' \neq j, A \in \calA^{i,j'}} z_A A^{-1}(\tau) - \left( \frac{1}{1-2\lambda}-\frac{1-{15}\lambda}{1-2\lambda}\right)\sum_{j' \neq j, A \in \calA^{i,j'}_\sfl} z_A A^{-1}(\tau) \\
		&= \frac{W}{1-2\lambda} - \frac{{15}\lambda}{1-2\lambda}W_\sfl \ \leq \ \frac{W}{1-2\lambda}-\frac{5\lambda W}{1-2\lambda}\ = \ \frac{1-5\lambda}{1-2\lambda} W \ \leq\ \frac{1-5\lambda}{1-2\lambda} \tau.
	\end{align*}
	
	Thus, we have 
	\begin{equation}
		\label{bound1}
			\widehat\E[C_j] \leq \left(2-\frac{3\lambda}{1-2\lambda}\right)\tau+p_{i,j}.
	\end{equation}

	\subsubsection{Case 2: $W_\sfh \geq 2W/3$.} In this case, we shall further divide heavy chains into good and bad ones. Roughly speaking, a good chain doesn't process the corresponding job too much very close to $\tau$. Intuitively, good chains will likely lead to the job being processed considerably before time $\tau$. We will show that there are `enough' good chains that will make a lot of times before $\tau$ non-idle.

	\begin{definition}
		We say a heavy chain $A \in \calA^{i, j'}$ for some $j' \neq j$ is good, if $A^{-1}(9\tau/10) \geq A^{-1}(\tau)/2$, and bad otherwise. Let $\calA^{i, j'}_{\sfh\sfg}$ and $\calA^{i, j'}_{\sfh\sfb}$ be the sets of good and bad heavy chains in $\calA^{i, j'}_\sfh$, respectively.  
	\end{definition}
		
	Let $W_{\sfh\sfg} = \sum_{j' \neq j, A \in \calA^{i, j'}_{\sfh\sfg}} z_A A^{-1}(\tau)$ and $W_{\sfh\sfb} = \sum_{j' \neq j, A \in \calA^{i, j'}_{\sfh\sfb}} z_A A^{-1}(\tau)$ respectively. So, $W_{\sfh} = W_{\sfh\sfg} + W_{\sfh\sfb}$. Next we show that there are not many bad chains. 
	
	\begin{claim}
		$W_{\sfh\sfb} \leq \tau/5$.
	\end{claim}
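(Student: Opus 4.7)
The plan is to exploit the definition of ``bad'' together with the LP capacity constraint~\eqref{lpc-2}, applied to the short interval $(9\tau/10, \tau]$. Intuitively, a bad chain, by definition, crams more than half of its pre-$\tau$ work into a window of length only $\tau/10$, so capacity alone caps how many such chains we can afford.

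First I would unpack the definition. For a bad heavy chain $A \in \calA^{i,j'}_{\sfh\sfb}$ we have $A^{-1}(9\tau/10) < A^{-1}(\tau)/2$, and since $A^{-1}(t)$ counts the amount of time $A$ schedules $j'$ in $(0,t]$, the quantity $A^{-1}(\tau) - A^{-1}(9\tau/10)$ is exactly the number of slots of $A$ that lie inside $(9\tau/10, \tau]$. The definition therefore says each bad chain contributes strictly more than $A^{-1}(\tau)/2$ slots to this length-$\tau/10$ window. Summing this lower bound over all bad heavy chains weighted by $z_A$, the total fractional load placed on the interval $(9\tau/10, \tau]$ by bad chains exceeds $W_{\sfh\sfb}/2$.

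On the other hand, \eqref{lpc-2} says that for each integer $t$ the total $z$-weight of chains using slot $t$ on machine $i$ is at most $1$; summing this over the (at most) $\tau/10$ integer slots in $(9\tau/10, \tau]$ bounds the total fractional load on that interval by $\tau/10$. Combining the two inequalities gives $W_{\sfh\sfb}/2 \leq \tau/10$, hence $W_{\sfh\sfb} \leq \tau/5$. There is no real obstacle here: the argument is essentially one line once one sees that ``bad'' forces concentration into the last tenth of $[0,\tau]$, and the only point requiring care is identifying $A^{-1}(\tau) - A^{-1}(9\tau/10)$ with the number of slots of $A$ in $(9\tau/10, \tau]$, which is immediate from Definition~\ref{def:chain}.
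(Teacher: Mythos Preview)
Your proposal is correct and is essentially the same argument as the paper's own proof: both use the definition of bad to get $A^{-1}(\tau) < 2\big(A^{-1}(\tau)-A^{-1}(9\tau/10)\big)$, sum $z_A$-weighted over bad heavy chains, and bound the resulting load on $(9\tau/10,\tau]$ by $\tau/10$ via the capacity constraint~\eqref{lpc-2}. The paper compresses this into a single displayed chain of inequalities, while you spell out the interpretation of $A^{-1}(\tau)-A^{-1}(9\tau/10)$ as the load placed in that interval, but the substance is identical.
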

		\begin{proof}
		$\displaystyle W_{\sfh\sfb} = \sum_{j' \neq j, A \in \calA^{i, j'}_{\sfh\sfb}}z_AA^{-1}(\tau) \ \leq \ 2\sum_{j' \neq j, A \in \calA^{i, j'}_{\sfh\sfb}} z_A(A^{-1}(\tau) - A^{-1}(9\tau/10))\leq \tau/5.$
		\end{proof}

	Thus, we have secured lots of good chains, precisely $W_{\sfh\sfg} \geq 2W/3 - \tau/5$.
			
	\begin{lemma}
		\label{lem:910}
		$\displaystyle \int_{0}^\tau (1-\idle(t))\sfd t \geq  \min\left\{\tau/10, \sum_{j' \in J_i: \tau_{j'} < 9\tau/10} p_{i, j'} \right\}$.
	\end{lemma}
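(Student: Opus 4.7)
The plan is to prove the inequality by splitting on whether or not every job in $S := \{j' \in J_i : \tau_{j'} < 9\tau/10\}$ has finished processing by time $\tau$. Let $V := \sum_{j' \in S} p_{i,j'}$, so the goal is to show $\int_0^\tau (1-\idle(t))\sfd t \geq \min\{\tau/10, V\}$. The two cases give the two terms in the minimum separately.

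First, if every job $j' \in S$ completes by $\tau$, then the entirety of each such job's processing happens in $[0, \tau]$, contributing $p_{i, j'}$ to the non-idle time there. Since the jobs are scheduled on disjoint time slots of $i$, we get $\int_0^\tau (1 - \idle(t))\sfd t \geq \sum_{j' \in S} p_{i, j'} = V$, which suffices for this case.

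Otherwise, some job in $S$ fails to complete by time $\tau$. The key claim to establish is that \emph{every} $t \in [9\tau/10, \tau]$ is non-idle, which immediately yields $\int_0^\tau (1-\idle(t))\sfd t \geq \tau/10$. I would prove this by contradiction: suppose some $t \in [9\tau/10, \tau]$ is idle. Recall the algorithm processes jobs of $J_i$ in increasing order of $\tau_{j'}$, with each job $j'$ starting no earlier than $\tau_{j'}$. If the machine is idle at $t$, then any job subsequently scheduled (in the $\tau$-order) must have $\tau_{j'} > t$; otherwise the previous job would have completed by $t$ and $j'$ would have started at time $\leq t$, contradicting idleness at $t$. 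In particular, every job with $\tau_{j'} \leq t$ — which includes all of $S$, since $9\tau/10 \leq t$ — has already been fully scheduled and completed by time $t \leq \tau$. This contradicts the case assumption, proving the claim.

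I do not expect a real obstacle here; the only subtlety is correctly capturing, in the idleness argument, the interplay between the $\tau$-ordering and the ``start-no-earlier-than-$\tau_{j'}$'' constraint imposed by the rounding. Once that scheduling-structural fact is carefully stated, the rest is immediate from the case split.
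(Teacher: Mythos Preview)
Your proposal is correct and follows exactly the same case split as the paper's proof: either all jobs in $S$ complete by $\tau$ (giving the $V$ bound), or some do not, in which case the entire interval $(9\tau/10,\tau]$ is non-idle (giving the $\tau/10$ bound). You supply more detail for the second case than the paper does, but the argument and structure are identical.
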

	\begin{proof}
		The LHS of the inequality is the total length of non-idle times on machine $i$ before time $\tau$. If all jobs in $J'= \{j' \in J_i: \tau_{j'} < 9\tau/10\}$ completes before $\tau$, LHS is at least $\sum_{j' \in J'}p_{i, j'}$. Otherwise, $t$ is not idle for any $t \in (9\tau/10, \tau]$ and thus LHS is at least $\tau/10$. 
	\end{proof}
	
	We will lower bound the expected value of the RHS in Lemma~\ref{lem:910} as follows.  Note that we only use jobs $j'$ that have good chains since other jobs are not very useful for deriving a lower bound. 
	The proof is somewhat technical, so we first derive an upper bound on $\widehat\E C_j$ assuming that the bound is true. 
	
	\begin{lemma}
		\label{lem:q}
			 $Q: = \widehat \E\min\left\{\frac{\tau}{10}, \sum_{j' \in J_i: A_{j'} \in \calA^{i,j'}_{\sfh\sfg}, \tau_{j'}<9\tau/10}p_{i, j'} \right\} \geq \frac{\gamma W_{\sfh\sfg}}{10}$ where $\gamma = \left(1-\frac1e\right)\frac{(1-30\lambda)}{30(1-2\lambda)}$.
	\end{lemma}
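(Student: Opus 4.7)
Set $a := \tau/10$. For each $j' \ne j$, let $E_{j'}$ denote the event $\{A_{j'} \in \calA^{i,j'}_{\sfh\sfg},\ \tau_{j'} < 9\tau/10\}$, let $X_{j'} := p_{i,j'}\,\mathbb{1}[E_{j'}]$, and $X'_{j'} := \min(a, X_{j'}) \in [0,a]$. Writing $S := \sum_{j'\ne j} X_{j'}$, the quantity inside the $\min$ defining $Q$ equals $S$, and $\min(a,S)\ge\min(a,S')$ where $S':=\sum_{j'\ne j} X'_{j'}$. Because $E_{j'}$ depends only on the independent random choices $(A_{j'}, \theta_{j'})$, the variables $X'_{j'}$ are mutually independent and independent of the conditioning $\{i_j=i, \tau_j=\tau\}$, and $\widehat\Pr[E_{j'}] = \sum_{A\in\calA^{i,j'}_{\sfh\sfg}} z_A F(A^{-1}(9\tau/10)/p_{i,j'})$.

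The plan is to apply the elementary inequality $\min\bigl(a,\sum_k x'_k\bigr) \ge a\bigl(1-\prod_k(1-x'_k/a)\bigr)$, valid whenever $x'_k\in[0,a]$ (by induction from $1-(1-y_1)(1-y_2)\le y_1+y_2$), to the $X'_{j'}$; then take expectations using independence; then use $1-x\le e^{-x}$ followed by the concave bound $1-e^{-x}\ge(1-1/e)\min(1,x)$. This yields
\begin{equation*}
Q \;\ge\; a\Bigl(1-\prod_{j'\ne j}\bigl(1-\widehat\E[X'_{j'}]/a\bigr)\Bigr) \;\ge\; a\bigl(1-e^{-T/a}\bigr) \;\ge\; a(1-1/e)\min(1,T/a),
\end{equation*}
where $T := \sum_{j'\ne j}\widehat\E[X'_{j'}] = \sum_{j'\ne j}\min(a,p_{i,j'})\,\widehat\Pr[E_{j'}]$.

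Next I lower-bound $T$ using the structural properties of good-heavy chains. For every $A\in\calA^{i,j'}_{\sfh\sfg}$, goodness and heaviness give $A^{-1}(9\tau/10)\ge A^{-1}(\tau)/2 \ge p_{i,j'}/30$, so applying Claim \ref{claim:npdf-2} with $\theta'=1/30$ and $\theta=A^{-1}(9\tau/10)/p_{i,j'}$ gives $F(A^{-1}(9\tau/10)/p_{i,j'})\ge \frac{1-30\lambda}{2(1-2\lambda)}\cdot A^{-1}(\tau)/p_{i,j'}$. Heaviness additionally forces $p_{i,j'}\le 15\,A^{-1}(\tau) \le 15\tau = 150a$, so $\min(a,p_{i,j'})/p_{i,j'}\ge 1/150$. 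Multiplying these estimates and summing gives
\begin{equation*}
T \;\ge\; \frac{1-30\lambda}{2(1-2\lambda)}\cdot\frac{1}{150}\sum_{j'\ne j,\, A\in\calA^{i,j'}_{\sfh\sfg}} z_A\, A^{-1}(\tau) \;=\; \frac{1-30\lambda}{300(1-2\lambda)}\,W_{\sfh\sfg}.
\end{equation*}

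The proof concludes with a case split. If $T\le a$, then $Q\ge(1-1/e)T \ge \frac{(1-1/e)(1-30\lambda)}{300(1-2\lambda)} W_{\sfh\sfg} = \gamma W_{\sfh\sfg}/10$. Otherwise $Q\ge(1-1/e)a = (1-1/e)\tau/10 \ge \gamma W_{\sfh\sfg}/10$, using $W_{\sfh\sfg}\le W\le\tau$ and the trivial $\gamma\le 1-1/e$. The main obstacle is bookkeeping the factor $(1-1/e)$, which must arise from exactly one application of the concave bound: the truncation $X'_{j'}:=\min(a,X_{j'})$ is essential, since a more naive approach replacing $1-e^{-p_{i,j'}/a}$ by $(1-1/e)\min(1,p_{i,j'}/a)$ per job before summing would incur a second $(1-1/e)$ loss and destroy the required ratio.
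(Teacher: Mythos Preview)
Your proof is correct and reaches exactly the same numerical bound as the paper, but the mechanism is different. The paper introduces an auxiliary random function $\Psi(\alpha,p')$ and proves a monotonicity lemma showing that replacing $(\alpha_{j'},p'_{j'})$ by $(\alpha_{j'}/a,\,a\,p'_{j'})$ can only decrease $\E\Psi$; it then normalizes every $p'_{j'}$ to $15\tau$ and bounds $\E\Psi$ by $(\tau/10)$ times the probability that at least one job fires. You instead truncate each contribution at $a=\tau/10$ and apply the pointwise union-bound inequality $\min\big(a,\sum_k x'_k\big)\ge a\big(1-\prod_k(1-x'_k/a)\big)$, taking expectations afterward via independence. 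The two routes are dual: the paper scales sizes \emph{up} to $15\tau$ (absorbing the size disparity into $\alpha$), whereas you cap sizes \emph{down} at $\tau/10$ (absorbing it into the factor $\min(a,p_{i,j'})/p_{i,j'}\ge 1/150$); in both cases the ratio $15\tau/(\tau/10)=150$ is what produces the constant in $T$. Your argument is a bit more elementary in that it dispenses with the scaling lemma and the auxiliary $\Psi$, while the paper's normalization makes the ``at least one job fires'' intuition more explicit.
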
	

	Lemmas~\ref{lem:910} and~\ref{lem:q} will give us an upper bound on the length of idle times before $\tau$. To bound the total expected volume of jobs with smaller $\tau$ values than job $j$, we use the following obvious bound. 
	\begin{align*}
		\widehat\E\left[\sum_{j' \in J_i: \tau_{j'} < \tau}p_{i,j'}\right] = \sum_{j' \neq j, A \in \calA^{i,j'}}z_A F\left(\frac{A^{-1}(\tau)}{p_{i,j'}}\right)p_{i, j'} \leq \frac{1}{1-2\lambda}\sum_{j' \neq j, A \in \calA^{i,j'}}z_A \left(\frac{A^{-1}(\tau)}{p_{i,j'}}\right) p_{i,j'}  = \frac{1}{1- 2\lambda} W.
	\end{align*}

	Applying these two bound to Eq.~\eqref{obs:decompose-C-j} and using the fact that $W \leq \tau$, we have 
	\begin{align}
		\hat\E[C_j] &\leq \frac{1}{1-2\lambda}W + \left(\tau - \frac{\gamma W_{\sfh\sfg}}{10} \right)+ p_{i, j}
		\leq  \frac{1}{1-2\lambda}W + \tau - \frac{\gamma}{10}\left(\frac{2W}{3}-\frac{\tau}{5}\right) + p_{i,j} \nonumber\\
		&\leq \frac{1}{1-2\lambda}\tau + \tau - \frac{\gamma}{10}\cdot\frac{7\tau}{15} + p_{i,j}
		 \ =\ \frac{2-2\lambda-(1-1/e)(1/30-\lambda)(7/150) }{1-2\lambda}\tau + p_{i,j}. \label{bound2}
	\end{align}

	This bound (\ref{bound2}) will be combined with (\ref{bound1}) for Case 1 in the following section to complete the analysis.

\bigskip
	The remainder of this section is devoted to proving Lemma~\ref{lem:q}. The main difficulty in lower bounding $Q$ is no matter how big the second term in $Q$ is, the quantity is capped at $\tau / 10$. Hence if jobs are very large compared to the cap, $Q$ can be very small. Fortunately, we have found lots of good chains. Good chains process their corresponding jobs considerably before $\tau$. This implies that such jobs cannot be very large compared to $\tau$. 

	For formal proof, we define a random function $\Psi(\alpha, p')$ over a vector $\alpha \in [0, 1]^{J \setminus j}$ and $p' \in \R_{\geq 0}^{J \setminus j}$ as follows. Initially let $S \gets 0$. Then for every $j' \neq j$, with probability  $\alpha_{j'}$, we let $S = S + p_{j'}$. Then let $\Psi(\alpha, p') = \min\{\tau /10, S\}$.  We define $\alpha^*_{j'} = \widehat{\Pr}[j' \in J_i, A_{j'} \in \calA^{i, j'}_{\sfh\sfg}, \tau_{j'} < 9\tau/10]$ and $p^*_{j'} = p_{i, j'}$ for every $j' \neq j$. Then $Q$ is exactly $\E\Psi(\alpha^*, p^*)$. The following lemma will allow us to increase job sizes while keeping their expected contribution to $S$ the same. 
	
	\begin{lemma}
		\label{lem:scale}
		If for some job $j' \neq j$ and some real number $a \geq 1$, we update $\alpha_{j'}$ to $\alpha_{j'}/a$ and $p'_{j'}$ to $ap'_{j'}$, then $\E\Psi(\alpha, p')$ can only decrease. 
	\end{lemma}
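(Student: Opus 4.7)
The proof will reduce to a one-dimensional concavity argument. First, I would condition on the outcomes of the independent coin flips for every job $j'' \neq j'$, and let $S_0$ denote the resulting partial sum. Conditioned on this, $\Psi(\alpha, p')$ becomes $g(X)$ where $g(x) := \min\{\tau/10,\, S_0 + x\}$ is a concave non-decreasing function, and $X$ is the random variable equal to $p'_{j'}$ with probability $\alpha_{j'}$ and to $0$ otherwise. After the update, the corresponding random variable $\tilde X$ equals $a p'_{j'}$ with probability $\alpha_{j'}/a$ and $0$ otherwise.

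The key observation is that $\E X = \alpha_{j'} p'_{j'} = \E \tilde X$, yet $\tilde X$ is a mean-preserving spread of $X$: its nonzero atom is pushed outward from $p'_{j'}$ to $a p'_{j'}$ while its mass is rescaled by $1/a$ (possible precisely because $a \geq 1$). Since $g$ is concave, the Rothschild–Stiglitz inequality (equivalently, Jensen's inequality applied to $g$) gives $\E g(X) \geq \E g(\tilde X)$. Averaging back over the randomness of the coin flips of the other jobs then yields $\E \Psi(\alpha, p') \geq \E \Psi(\tilde\alpha, \tilde p')$, which is the claim.

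The one place that in a full write-up deserves explicit verification — but is entirely elementary — is the single concavity inequality $g(p'_{j'}) \geq (1/a)\, g(a p'_{j'}) + (1 - 1/a)\, g(0)$, obtained by writing $p'_{j'}$ as a convex combination of $0$ and $a p'_{j'}$; multiplying by $\alpha_{j'}$ and rearranging turns this directly into $\E g(X) - \E g(\tilde X) \geq 0$. I do not foresee a genuine obstacle here: the lemma is essentially the fact that capping a sum at a constant is a concave operation in each summand, so spreading an individual summand while preserving its mean cannot increase the expectation. The only minor care needed is to ensure that the randomness for $j'$ is independent of the other coordinates so that the conditioning argument goes through cleanly, which is immediate from the definition of $\Psi$.
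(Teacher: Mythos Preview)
Your proposal is correct and follows essentially the same approach as the paper: both condition on the coin flips of all jobs other than $j'$ and then compare the two conditional expectations. The only difference is cosmetic --- you invoke the concavity of $g(x)=\min\{\tau/10,S_0+x\}$ and the single Jensen-type inequality $g(p'_{j'})\ge (1/a)g(ap'_{j'})+(1-1/a)g(0)$, whereas the paper carries out the same comparison by an explicit algebraic computation of $\E[\Psi\mid S']$ before and after the update; these are the same argument in different clothing.
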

	\begin{proof}
		Let $S'$ be the contribution of $J \setminus \{j, j'\}$ to $S$ in the definition of $\Psi(\alpha, p')$. In other words, we fix the random events for all jobs except $j$ and $j'$. 	Then, $S = S' + p'_{j'}$ with probability $\alpha_{j'}$ and $S = S'$ with probability $1-\alpha_{j'}$. Assume that $S' \leq \tau/10$ since otherwise, $\Psi$ is the same for both cases. 
		
	Before updating $\alpha$ and $p'$, we have 
		\begin{align*}
			\E[\Psi|S']  &= \alpha_{j'} \min\{S' + p'_{j'}, \tau/10\} + (1 - \alpha_{j'}) \min\{S', \tau/10\} =  S' + \alpha_{j'}\min\{p'_{j'}, \tau/10-S'\} \\
			 &= S' + \min\{\alpha_{j'}p_{j'}, \alpha_{j'}(\tau/10-S')\}.
		\end{align*}
		
	After the update, $\alpha_{j'}$ becomes $\alpha_{j'}/a$ and $p'_j$ becomes $ap'_j$. Thus, after update,
		\begin{align*}
		\E[\Psi|S'] 			&= S' + \min\{\alpha_{j'}p_{j'}, (\alpha_{j'}/a)(\tau/10-S')\}, 
		\end{align*}	
	which shows that $\E[\Psi]$ can only decrease after the update. 
	\end{proof}
	
	The next step is to show that a large fraction of the second quantity in $Q$ (or $\sum_{j' \neq j}\alpha^*_{j'}p^*_{j'}$) comes from good chains. 	
	
	\begin{lemma}
		\label{lem:20}
		$\sum_{j' \neq j}\alpha^*_{j'}p^*_{j'} \geq \frac{1-30\lambda}{2(1-2\lambda)}W_{\sfh\sfg}$.	
	\end{lemma}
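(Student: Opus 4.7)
\medskip\noindent\textbf{Proof plan for Lemma~\ref{lem:20}.} The plan is to expand $\alpha^*_{j'}p^*_{j'}$ into a sum over good heavy chains, then convert each $F$--value into a linear quantity involving $A^{-1}(\tau)$ using Claim~\ref{claim:npdf-2}, and finally invoke the definition of a good chain.

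First, I would unpack the conditional probability. Since $\tau_{j'}=A_{j'}(\theta_{j'}\cdot p_{i,j'})$ when $j'$ is assigned to machine $i$ with chain $A_{j'}$, the event $\{j'\in J_i,\,A_{j'}=A,\,\tau_{j'}<9\tau/10\}$ has (conditional) probability $z_A\cdot F\!\left(A^{-1}(9\tau/10)/p_{i,j'}\right)$; note that choice of $j$ does not affect other jobs so the conditioning is trivial. Summing over $A\in\calA^{i,j'}_{\sfh\sfg}$ gives
\begin{align*}
\alpha^*_{j'}p^*_{j'} \;=\; p_{i,j'}\sum_{A\in\calA^{i,j'}_{\sfh\sfg}} z_A\cdot F\!\left(\frac{A^{-1}(9\tau/10)}{p_{i,j'}}\right).
\end{align*}

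Next, I would argue that the argument of $F$ is at least $1/30$ for every good heavy chain. Indeed, by definition a good heavy chain satisfies $A^{-1}(9\tau/10)\geq A^{-1}(\tau)/2$ and $A^{-1}(\tau)\geq p_{i,j'}/15$, so $A^{-1}(9\tau/10)/p_{i,j'}\geq 1/30$. Also $A^{-1}(9\tau/10)/p_{i,j'}\leq 1$ trivially. Hence Claim~\ref{claim:npdf-2} applies with $\theta'=1/30$ and $\theta=A^{-1}(9\tau/10)/p_{i,j'}$, yielding
\begin{align*}
F\!\left(\frac{A^{-1}(9\tau/10)}{p_{i,j'}}\right) \;\geq\; \frac{1-30\lambda}{1-2\lambda}\cdot\frac{A^{-1}(9\tau/10)}{p_{i,j'}}.
\end{align*}

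Plugging this in cancels $p_{i,j'}$ and produces a sum of $z_A\cdot A^{-1}(9\tau/10)$ over all good heavy chains. Using $A^{-1}(9\tau/10)\geq A^{-1}(\tau)/2$ once more, this sum is at least $\tfrac12 W_{\sfh\sfg}$ by the definition of $W_{\sfh\sfg}$, and combining everything gives the claimed $\frac{1-30\lambda}{2(1-2\lambda)}W_{\sfh\sfg}$ bound. The only delicate step is verifying the hypotheses of Claim~\ref{claim:npdf-2} (in particular $\theta'\leq 1/2$ and $\theta'\leq\theta\leq 1$), but these follow immediately from the good--heavy definition, so no substantive obstacle remains.
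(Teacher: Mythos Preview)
Your proof is correct and follows essentially the same approach as the paper: expand $\alpha^*_{j'}p^*_{j'}$ as a sum over good heavy chains, apply Claim~\ref{claim:npdf-2} with $\theta'=1/30$ (after verifying $A^{-1}(9\tau/10)/p_{i,j'}\geq 1/30$ from the good--heavy definitions), and then use $A^{-1}(9\tau/10)\geq A^{-1}(\tau)/2$ to recover $W_{\sfh\sfg}$. Your writeup is in fact slightly more explicit than the paper's in checking the hypotheses of Claim~\ref{claim:npdf-2}.
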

	\begin{proof}
	\begin{align*}
		\sum_{j' \neq j}\alpha^*_{j'}p^*_{j'}&=\sum_{j' \neq j, A \in \calA^{i, j'}_{\sfh\sfg}}z_A F\left(\frac{A^{-1}(9\tau/10)}{p_{i,j'}}\right)p_{i,j'} \\
		&\geq \frac{1/{30}-\lambda}{(1-2\lambda)/{30}}\sum_{j' \neq j, A \in \calA^{i, j'}_{\sfh\sfg}}z_A\left(\frac{A^{-1}(9\tau/10)}{p_{i,j'}}\right)p_{i,j'}
		\ =\ \frac{1-30\lambda}{1-2\lambda}\sum_{j' \neq j, A \in \calA^{i, j'}_{\sfh\sfg}}z_AA^{-1}(9\tau/10)\\
		&\geq \frac{1-30\lambda}{1-2\lambda}\cdot\frac12\sum_{j' \neq j, A \in \calA^{i, j'}_{\sfh\sfg}}z_AA^{-1}(\tau)\ =\ \frac{1-30\lambda}{2(1-2\lambda)}W_{\sfh\sfg}.
	\end{align*}
	The first inequality follows by observing that for any good chain $A$, we have $A(9\tau / 10)/p_{i,j'} \geq 1/30$ and applying Claim~\ref{claim:npdf-2} with $\theta' = 1/30$ and $\theta = \frac{A^{-1}(9\tau/10)}{p_{i,j'}}$.
	\end{proof}	
		
	Notice if $\alpha^*_{j'} > 0$ then $\calA^{i, j}_{\sfh\sfb} \neq \emptyset$. Thus, taking an arbitrary $A \in \calA^{i, j}_{\sfh\sfb}$, we have $p^*_{j'} = p_{i, j'} \leq 15 A^{-1}(\tau) \leq 15\tau$.  Initially let $\alpha = \alpha^*$ and $p' = p^*$. 
	Then we apply Lemma~\ref{lem:scale}: for every $j'$ such that $\alpha_{j'} > 0$, we scale down $\alpha_{j'}$ and scale up $p'_{j'}$ by the same factor so that $p'_{j'}$ becomes $15\tau$. After the update, we have $\E\Psi(\alpha,p')\leq Q$. Moreover, $\sum_{j' \neq j}\alpha_{j'} p'_{j'} \geq \frac{1-30\lambda}{2(1-2\lambda)}W_{\sfh\sfg}$ as the operations maintained the left-hand-side in the bound of Lemma~\ref{lem:20}.	Thus, $\sum_{j' \neq j}\alpha_{j'} \geq \frac{1-30\lambda}{30(1-2\lambda)}\frac{W_{\sfh\sfg}}{\tau}$. Now, consider the process for computing $\Psi(\alpha, p')$. The probability that we add some $p'_{j'} = 15\tau$ to $S$ is 
	\begin{align*}
		1-\prod_{j' \neq j}(1-\alpha_{j'}) &\geq 1-\prod_{j' \neq j}e^{-\alpha_{j'}} = 1-\exp\left(-\sum_{j' \neq j}\alpha_{j'}\right) \geq 1-\exp\left(-\frac{(1-30\lambda)W_{\sfh\sfg}}{30(1-2\lambda)\tau}\right)\\
		&\geq \left(1-\frac1e\right)\frac{(1-30\lambda)W_{\sfh\sfg}}{30(1-2\lambda)\tau} = \gamma \frac{W_{\sfh\sfg}}{\tau}.
	\end{align*}
	
	Thus, $\E\Psi(\alpha, p') \geq \frac{\gamma W_{\sfh\sfg}}{\tau}\cdot \frac{\tau}{10} = \frac{\gamma W_{\sfh\sfg}}{10}$. Note that the expectation is lower bounded by the probability multiplied by $\tau / 10$ since the total size of $S$ is capped at $\tau / 10$ in $\Psi$. This completes the proof of Lemma~\ref{lem:q}.

\subsubsection{Wrapping up: Combining the Two Cases}
	
		We set $\lambda = 1/5100$, then in both cases (Eq. (\ref{bound1}) and (\ref{bound2})), we have $\widehat \E[C_j] < 1.99942\tau + p_{i, j}$. For a chain $A \in \calA^{i, j}$, $\E[\tau_j|i_j = i, A_j = A]$ is at most $A(p_{i,j}) - p_{i, j}/2$; this is where $\lpchain$ with each chain's cost associated with the corresponding job's completion time plays a crucial role. Thus,
	\begin{align*}
		\E[C_j] &< \sum_{i \in M, A \in \calA^{i, j}}z_A \left(1.99942(C_A-p_{i,j}/2)+p_{i,j}\right) \\
		&= \sum_{i \in M, A \in \calA^{i, j}} z_A (1.99942 C_A + 0.00029p_{i,j}) \leq 1.99971\sum_{i \in M, A \in \calA^{i, j}}z_AC_A.
	\end{align*}
	This is exactly 1.99971 times the (unweighted) contribution of $j$ to the LP solution. Thus, our algorithm is a $1.99971$-approximation, implying Theorem~\ref{thm:preemptive}.

\appendix

	\section{The Strength of the Configuration LP}

\label{sec:configuration-LP}

In this section, we consider \emph{all} convex programmings of the following form:
\begin{equation}
	\min \qquad \sum_{i \in M}f_i(x_i) \qquad \text{s.t} \label{LPForm}
\end{equation}
\vspace*{-30pt}

\begin{alignat}{2}
	\sum_{i \in M} x_{i, j} &= 1 &\qquad \forall j &\in J; \label{LPC:job-scheduled-config} \\
	x_{i, j} &\geq 0 &\qquad \forall i &\in M, j \in J. \nonumber
\end{alignat}

In the above, $x_i = \big(x_{i,j}\big)_{j \in J} \in [0, 1]^J$ is the vector of $x_{i,j}$s for all jobs; and  $f_i$ is a convex function over $[0, 1]^J$. In order for this convex programming to be valid, we require:
\begin{enumerate}[label=(*)]
	\item For every $x_i \in \{0, 1\}^J$, $f_i(x_i)$ is at most the total weighted completion time of the optimal schedule of jobs $\{j:x_{i, j} = 1\}$ on machine $i$.
\end{enumerate}

In this form, the only connection across different machines is made by  Constraint~\eqref{LPC:job-scheduled-config}.  Except for this constraint, the machines are treated separately.  All known programmings for $R||\sum_j w_j C_j$ and for $R|r_j|\sum_j w_j C_j$, including our interval LP, the LPs of  \cite{sethuraman1999optimal, schulz2002scheduling}, the convex programming of \cite{Skutella01},  the SDPs of \cite{Skutella01, BansalSS15} and the configuration LP of \cite{sviridenko2013approximating}, are of this form. 

In the configuration LP of \cite{sviridenko2013approximating}, the definition of $f_i(x_i)$ is the minimum of $\sum_{S \subseteq J}y_{i,S}W_{i, S}$ over all vectors $\left(y_{i,S}\right)_{S \subseteq J}$ satisfying
\begin{alignat}{2}
	\sum_{S \subseteq J} y_{i, S} &= 1;  \nonumber \\
	\sum_{S \subseteq J: j \in S} y_{i, S} &= x_{i,j}, &\qquad \forall j &\in J; \label{LPC:configure-agrees-with-x}\\
	y_{i, S} &\geq 0; &\qquad \forall S &\subseteq J; \nonumber
\end{alignat}
where $W_{i, S}$ is the total weighted completion time of scheduling $S$ on machine $i$ optimally.\footnote{In the configuration LP of \cite{sviridenko2013approximating}, each $S$ is not just a set of a jobs, but an actual schedule of some jobs on $i$. However, it is easy to see that their version is equivalent to ours.}  

Let $f^*_i$ be the definition of the function $f_i$ in the configuration LP.  We shall show that for every convex function $f_i$ satisfying (*), we have $f_i(x_i) \leq f^*_i(x_i)$ for every $x_i \in [0, 1]^J$.   Indeed, consider the vector $\left(y_{i, S}\right)_{S \subseteq J}$ that defines the $f^*_i(x_i)$ value. We have
\begin{align*}
	f_i(x_i) =  f_i\left( \sum_{S \subseteq J}y_{i, S} v^S \right) \leq  \sum_{S \subseteq J}y_{i, S} f_i(v^S)  \leq \sum_{S \subseteq J}y_{i, S}W_{i, S} = f^*_i(x_i), 
\end{align*}
where $v^S$ is the indicator vector for $S$: $v^S_j = 1$ if $j \in S$ and $v^S_j = 0$ if $j \notin S$.  The first equality is due to  Constraint~\eqref{LPC:configure-agrees-with-x}.  The first inequality follows from the convexity of $f_i$ and the second from (*).  Thus, the configuration LP gives the largest possible $f_i(x_i)$ value for every $x_i \in [0, 1]^J$. 
	\section{Lower Bound}
	\label{sec:lb}
	
	In this section, we show a lower bound of $e/(e-1) - \epsilon \geq 1.581$ for any algorithm based on independent rounding; 
	see the discussion before Theorem~\ref{thm:lb} for the description.  We remark that our instance is very similar to the instance of \cite{goemans2002single} which gives an $e/(e-1)$-lower bound on the integrality gap of some time-indexed LP for $1|r_j|\sum_{j}w_jC_j$. 
	
	In our lower bound instance, the fractional solution is a convex combination of optimum integral solutions; however, the independent rounding algorithm gives a solution whose cost is at least $e/(e-1)-\epsilon$ times that of an optimum integral solution.  Thus the limitation of the independent rounding algorithm is irrespective of the underlying convex programming: even if the convex programming exactly captures the convex hull of all integral solutions, the rounding algorithm still produces a sub-optimum solution. 
	
	\newcommand{\sfs}{{\mathsf{s}}}
	
	We assume $1/\epsilon$ is an integer and $T$ is an integer multiple of $1/\epsilon^3$.  Assume that $T$ is sufficiently large.  The lower bound instance consists of $1 / \eps + 1$ \emph{identical} machines indexed by $1, 2, 3, \cdots, 1/\eps + 1$,  $1/\eps$ big jobs $j^\sfb_1, j^\sfb_2, \cdots, j^\sfb_{1/\eps}$ of size $T$ and $T$ small jobs $j^\sfs_1, j^\sfs_2, \cdots, j^\sfs_T$ of size $1$. Big jobs arrive at time $0$; the small job $j^\sfs_t$ arrives at time $t-1$ for every $t \in [T]$.  Each big job has weight $\frac{\eps}{e}$, and the small job $j^\sfs_{t}$ has weight $\frac{e^{-t/T}}{T}$.
	
	We now define the fractional solution, in which all big jobs are scheduled in the interval $(0, T]$ and the small job $j^\sfs_t$ is scheduled in the interval $(t-1,t]$ for every $t \in [T]$.  For every $i \in [1/\eps]$, $j^\sfb_i$ is scheduled on machine $i$ with fraction $1-\eps$, and on machine $1/\eps + 1$ with fraction $\eps$.  Every small job $j^\sfs_t$ is scheduled on every machine $i$ with fraction $\eps$.  Notice that this fractional solution is a convex combination of integral $1/\eps$ solutions, each with fraction $\epsilon$ in the combination. In the $i$-th integral solution, we schedule all small jobs on machine $i$, $j^\sfb_i$ on machine $1/\eps + 1$ and $j^\sfb_{i'}$ on machine $i'$ for every $i' \in [1/\eps]\setminus \{i\}$.  The cost of the fractional solution of \emph{any} valid convex programming is at most the cost of each integral solution, which is
	\begin{align}
		&\quad \sum_{t = 1}^T \frac{e^{-t/T}}{T}t  + \frac1\eps \cdot \frac{\epsilon}{e} \cdot T \ \leq \  \frac{1}{T}\int_{0}^{T}e^{-t/T}(t+1) \sfd t + \frac T e \nonumber \\
		&\leq T \int_{0}^{1}e^{-\theta}(\theta+1/T) \sfd \theta + \frac Te
		\ = \  -T (\theta+1)e^{-\theta}\big|_{\theta = 0}^1 + \left(1-\frac1e\right) +  \frac Te \nonumber \\
		&= \left(1-\frac2e\right) T +   \frac Te + \left(1-\frac1e\right) 
		\ = \  \left(1-\frac1e\right) (T+1). \label{fractional-cost}
	\end{align}
	\smallskip
	
	We now proceed to consider the expected cost of the solution produced by the independent rounding algorithm. We will only lower bound the expected cost on machine $1$ since machines $1, 2, 3, \cdots, 1/\epsilon$ are symmetric; we shall ignore the cost on machine $1/\eps + 1$. Since we only consider machine 1, we use $j^\sfb$ to denote $j^\sfb_1$, and $j_t$ to denote $j^\sfs_t$ for every $t \in [T]$.

	We observe that for any sufficiently large interval, almost $\eps$ fraction of jobs arriving during the interval are assigned to the machine with a high probability.

	\begin{lemma}
		\label{lem:uniform}
		With probability at least $1-\epsilon$, for every $0 \leq \ell \leq T - \epsilon T$, there are at least $(1-\epsilon)\epsilon^2 T$ values  $\ell' \in [\ell, \ell + \epsilon T)$, such that $j_{\ell'}$ is assigned to the machine. 
	\end{lemma}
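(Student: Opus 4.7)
The plan is to apply a standard multiplicative Chernoff bound together with a union bound over windows. The key observation is that, in the fractional solution, every small job $j_{\ell'}$ is scheduled on machine $1$ with fraction $\epsilon$, so under independent rounding each small job is assigned to machine $1$ independently with probability $\epsilon$.

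First, I would fix an integer starting point $\ell \in \{0, 1, \ldots, T - \epsilon T\}$ and let $X_\ell$ denote the number of indices $\ell' \in \{\ell+1, \ell+2, \ldots, \ell + \epsilon T\}$ for which $j_{\ell'}$ is assigned to machine $1$. Since $\epsilon T$ is an integer (this follows from $T$ being an integer multiple of $1/\epsilon^3$), $X_\ell$ is a sum of $\epsilon T$ independent Bernoulli$(\epsilon)$ variables with mean $\mu := \epsilon^2 T$. A multiplicative Chernoff bound then yields
\[
\Pr\bigl[X_\ell < (1 - \epsilon)\mu\bigr] \;\leq\; \exp\bigl(-\epsilon^2 \mu / 2\bigr) \;=\; \exp\bigl(-\epsilon^4 T / 2\bigr).
\]
Taking a union bound over the at most $T$ integer values of $\ell$, the probability that \emph{some} integer window falls below the threshold is at most $T \exp(-\epsilon^4 T / 2)$, which is at most $\epsilon$ once $T$ is sufficiently large (which we assumed).

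To extend to arbitrary real $\ell \in [0, T - \epsilon T]$, I would note that the set of integer indices in $[\ell, \ell + \epsilon T)$ is determined by $\lfloor \ell \rfloor$ and agrees, up to at most one index, with an integer window handled above; the slack between the target $(1-\epsilon)\epsilon^2 T$ and the mean $\epsilon^2 T$ easily absorbs a difference of $1$ for $T$ large. Thus the bound for integer $\ell$ implies the bound for all real $\ell$ in the specified range.

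There is no serious obstacle here: the statement is essentially a concentration claim about a collection of independent Bernoulli assignments, and the only thing to check is that the tail bound $T \exp(-\epsilon^4 T / 2) \leq \epsilon$ holds for our "sufficiently large" $T$, which is automatic once $T \geq C \epsilon^{-4} \log(1/\epsilon)$ for a suitable constant $C$. The mild discretization issue above is the only non-automatic step, and it is resolved by exploiting the slack between $(1-\epsilon)\epsilon^2 T$ and the binomial mean.
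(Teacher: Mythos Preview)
Your proposal is correct and is exactly the intended unpacking of the paper's one-line proof, which simply invokes ``standard concentration inequalities together with the fact that $T$ is sufficiently large.'' The Chernoff bound on each length-$\epsilon T$ window followed by a union bound over the at most $T$ integer starting points is precisely what the paper has in mind; your handling of real $\ell$ via the integer-window slack is a harmless extra detail.
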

	\begin{proof}
		This follows from standard concentration inequalities together with the fact that $T$ is sufficiently large.
	\end{proof}
	
	Thus, we proceed with our analysis assuming that the event in the lemma happens and $j^\sfb$ is assigned to machine 1. It is convenient to pretend that all small jobs are assigned to machine $1$, but their weights are scaled down by a factor of $\eps$.

	The only flexibility we have is to select an integer $\tau \in [0, T]$ and schedule all jobs in $J_i$ using the order $j_1, j_2, \cdots, j_{\tau}, j^\sfb, j_{\tau+1}, j_{\tau+2}, \cdots, j_{T}$ since small jobs arriving later have smaller weights.  The cost incurred by small jobs in $\{j_1, j_2, \cdots, j_\tau\}$ is, 
	\begin{align}
	& \sum_{\ell\in [\tau]: j_\ell \in J_i } \frac{e^{-\ell/T}}{T} \ell  \ \geq \  \sum_{k = 0}^{\floor{\tau/(\epsilon T)}-1} (1-\epsilon)\epsilon^2 T \frac{e^{-k\epsilon T / T}}{T} (k\epsilon T) \nonumber \\
		&= (1-\epsilon)\epsilon^3 T\sum_{k=0}^{\floor{\tau/(\epsilon T)}-1} e^{-k\epsilon} k \ \geq \  (1-\epsilon)\eps^3 T \int_{1}^{\floor{\tau/(\epsilon T)}} e^{-t\epsilon}t \sfd t   \nonumber\\
		&= (1-\eps)\eps T \int_{\epsilon}^{\epsilon \floor{\tau/(\epsilon T)}} e^{-\theta} \theta\sfd \theta \quad \geq\quad (1-\eps)\eps T\int_{\eps}^{\tau/T - \eps} e^{-\theta}\theta \sfd \theta. \label{small-first}
	\end{align}
	
	The first inequality follows by considering small jobs arriving during each interval $[k \eps T, (k+1) \eps T)$ for $0 \leq k \leq \floor{\tau/(\epsilon T)}-1$. Each of the intervals has length $\eps T$ hence has at least $(1 - \eps) \eps^2 T$ jobs arriving during the interval due to Lemma~\ref{lem:uniform}. Every job  arriving during $[k \eps T, (k+1) \eps T)$ has completion time at least $k \eps T$ and weight at least $e^{-(k+1) \eps T / T} / T = e^{-k\eps-\eps}/T$.
	
	\smallskip
	Note that the big job cannot start before $\tau - \eps T$ since at least one small job arrives during $[\tau - \eps T, \tau)$ due to Lemma~\ref{lem:uniform} and we decided to schedule the big job after the small job. Hence no job from  $\{j_{\tau + 1}, j_{\tau + 2}, \cdots, j_T\}$ can complete before $\tau - \eps T + T$ if it is assigned to  machine $1$.	The cost incurred by small jobs in $\{j_{\tau + 1}, j_{\tau + 2}, \cdots, j_T\}$ is,
	\begin{align}
		&\quad \left(\sum_{\ell: \tau < \ell \leq T, j_\ell \in J_i } \frac{e^{-\ell/T}}{T}\right)(\tau - \epsilon T + T)  
		\ \geq\ \frac{\tau + (1-\eps) T}{T}\cdot\sum_{k=1}^{\floor{(T-\tau)/(\eps T)}} (1-\epsilon)\eps^2 T e^{-(\tau + k\epsilon T)/T} \nonumber \\
		& = (1-\epsilon)\epsilon^2(\tau + (1-\eps) T) e^{-\tau / T} \sum_{k=1}^{\floor{(T-\tau)/(\eps T)}} e^{-k\eps} \nonumber
		 \ \geq\  (1-\eps)^2\eps^2(\tau + T) e^{-\tau/T} \int_{1}^{{\floor{(T-\tau)/(\eps T)}}+1} e^{-t\epsilon}\sfd t \nonumber\\
		&=(1-\eps)^2\eps(\tau + T) e^{-\tau/T}\int_{\epsilon}^{\epsilon{\floor{(T-\tau)/(\eps T)}}+\epsilon} e^{-\theta}\sfd \theta
		\ \geq\ (1-\eps)^2\eps(\tau + T) e^{-\tau/T}\int_{\epsilon}^{1-\tau/T} e^{-\theta}\sfd \theta.  \label{small-second}
	\end{align}
	
	In the above, the first inequality follows by considering small jobs arriving during each interval $[\tau + (k-1) \eps T, \tau + k \eps T)$ for $1 \leq k \leq \floor{(T-\tau)/(\eps T)}$.
	
	\smallskip
	Let $\beta = \tau /T$. Then the cost incurred by small jobs is at least,
	\begin{align*}
		(\ref{small-first}) + (\ref{small-second}) &\geq (1-\eps)\eps T\int_{\eps}^{\tau/T - \eps} e^{-\theta}\theta \sfd \theta + (1-\eps)^2\eps(\tau + T) e^{-\tau/T}\int_{\epsilon}^{1-\tau/T} e^{-\theta}\sfd \theta \\
		&\geq (1-\epsilon)^2\epsilon T\left( \int_{\epsilon }^{\beta - \epsilon} e^{-\theta} \theta \sfd \theta + (1+\beta)e^{-\beta}\int_{\epsilon}^{1-\beta} e^{-\theta}\sfd \theta\right) \\
		&\geq (1-\epsilon)^2\epsilon T\left( \int_{0}^{\beta} e^{-\theta} \theta \sfd \theta + (1+\beta)e^{-\beta}\int_{0}^{1-\beta} e^{-\theta}\sfd \theta -  5\epsilon \right) \\
		& = (1-\eps)^2\eps T \left(-(\theta+1)e^{-\theta}\big|_0^\beta + (1+\beta)e^{-\beta}(1- e^{-(1-\beta)})  - 5\epsilon \right) \\
		& = (1-\eps)^2\eps T\left(1-(\beta + 1)e^{-\beta} + (1+\beta)(e^{-\beta} - e^{-1}) - 5\epsilon\right) \\
		& = (1-\eps)^2\eps T \left(1 - (1+\beta)e^{-1} - 5\epsilon\right).
	\end{align*}

	As discussed above, the completion time of $j^\sfb$ is at least $\tau + T  - \eps T = (1+\beta - \eps)T \geq (1-\eps)^2(1+\beta)T$ and the weight of $j_b$ is $\epsilon / e $. So, the total weighted completion time of all jobs is at least,
	$$(1-\eps)^2\eps T \left(1 - (1+\beta)e^{-1} - 5\epsilon\right) + (1-\eps)^2\epsilon T(1+\beta)e^{-1} = (1-\eps)^2\eps T(1-5\eps) \geq (1-7\eps)\eps T.$$
	 Thus, using the independent rounding algorithm, we can not get an approximation ratio better  than 
	$$\frac{(1/\eps) (1-7\eps)\eps T}{ (1-1/e) (T+1)} = \frac{e}{e-1} - O(\eps) > 1.581 - O(\eps),$$ 
	proving  Theorem~\ref{thm:lb}.
		
	\section{Solving \ref{LP:interval} When $T$ is Not Polynomially Bounded}
	\label{sec:non-preemptive-a}

In Section~\ref{sec:non-preemptive}, we proved Theorem~\ref{thm:non-preemptive} assuming that all parameters are polynomially bounded by $n$ and $m$. In this section, we show that we can remove this simplifying assumption with a loss of $(1+\eps)$ in the approximation ratio for any constant $0 <  \eps \leq 1/ 2$. 

To make our algorithm work for arbitrary instances, we only need to reduce the size of $\lpinterval$ since the rounding algorithm runs in time polynomial in the number of non-zero variables in the LP solution. Towards this end, we will 
restrict jobs starting times to a poly-sized set $\bar{S}$ of times. This will reduce the number of $y$ variables. Also we will enforce Constraints (\ref{LPC:cover-by-one-interval}) only for polynomially many times. 

Let $\delta  = \eps / 2n$. Let $\bar{S} = \{0, 1, 2, ..., \lceil 1 / \delta \rceil,   \lceil (1+\delta) / \delta \rceil, \lceil (1+\delta)^2 / \delta \rceil, ..., \lceil (1+\delta)^K / \delta \rceil\}$ where $K$ is the smallest integer such that $(1+\delta)^K / \delta \geq (1+\eps) T$. 
We enforce that every job starts only at a time in $\bar{S}$, i.e. $s \in \bar{S}$ in $\lpinterval$. Since $|\bar{S}| = O(\frac{n}{\eps} + \frac{1}{\delta} \log T)$, we will have only polynomially many $y$ variables. We keep Constraints (\ref{LPC:cover-by-one-interval}) only for all $i$ and $t$ such that $t-1 \in \bar{S}$. Let's call the resulting LP the new LP in contrast to $\lpinterval$, which we will also call the old LP.

\medskip
It now remains to show two things: (i) there exists a feasible solution to the new LP whose value is at most $(1+\eps)$ times the integral optimum; and (ii) the solution to the new LP is also a feasible solution to the old LP -- in other words, we need to make sure that the new LP solution satisfies Constraints (\ref{LPC:cover-by-one-interval}) for all times $t$ and machines $i$. Here we need to increase the maximum time step $T$ considered in the LP to $(1+\eps)T$, but this is a minor detail.  

\smallskip
We begin by showing (i). Fix an optimum solution and consider any fixed machine $i$. It suffices to show that we can transform the optimal schedule on machine $i$ without increasing each job's start time by a factor of more than $1+\eps$.  We will only shift jobs' start times, thus jobs assigned to machine $i$ will remain there. For notational convenience, we rename jobs scheduled on $i$ as $1, 2, 3, ..., n'$ in increasing order of their starting times. 

Let $s^*_j$ denote $j$'s start time in the optimal solution. We define $\Delta_j$ recursively. Consider $j$ in increasing order and define $\Delta_j$ be the smallest non-negative integer such that $\bar{s}_j := s^*_j + (\Delta_1 + \Delta_2 + ... + \Delta_j) \in \bar{S}$. We start processing each job $j$ at time $\bar{s}_j$. It is easy to see that this schedule is feasible by inductively showing that one can feasibly schedule all jobs by starting jobs $1, 2, ..., j$ at times $\bar{s}_1, \bar{s}_2, ... \bar{s}_j$, respectively, and then the remaining jobs $\Delta_1 + \Delta_2 + ... + \Delta_j$ time steps later than their respective start times in the original optimal solution. Notice that now all jobs start at times in $\bar{S}$.

We now show by induction on $j$ that this shifting process increases each job's start time by a factor of at most $1+\eps$. 

\begin{claim}
	\label{claim:db}
	For any positive integer $s$, let $\Delta$ be the smallest non-negative integer such that $s + \Delta \in \bar{S}$. Then, $s + \Delta \leq (1 + \delta) s$.
\end{claim}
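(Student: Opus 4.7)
My plan is to prove the claim by splitting into cases according to where $s$ sits inside the structure of $\bar S$. Recall that $\bar S$ consists of two parts glued together: the ``dense'' part $\{0,1,\ldots,\lceil 1/\delta\rceil\}$ containing every integer up to $\lceil 1/\delta\rceil$, and the ``geometric'' part $\{\lceil (1+\delta)^k/\delta\rceil : 0\leq k\leq K\}$ whose elements grow by roughly a factor of $1+\delta$. So there are two cases to handle.

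In the first case, if $s\leq \lceil 1/\delta\rceil$, then $s$ is itself an integer in the dense part of $\bar S$, so $\Delta=0$ and $s+\Delta=s\leq (1+\delta)s$ trivially.

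In the second case, $s>\lceil 1/\delta\rceil$, and I would let $k\geq 1$ be the unique index with
\[
\lceil (1+\delta)^{k-1}/\delta\rceil < s \leq \lceil (1+\delta)^k/\delta\rceil.
\]
Since there are no elements of $\bar S$ strictly between these two values, $s+\Delta=\lceil (1+\delta)^k/\delta\rceil$. The desired inequality $s+\Delta\leq (1+\delta)s$ will follow from the two elementary bounds $\lceil (1+\delta)^k/\delta\rceil \leq (1+\delta)^k/\delta+1$ on the numerator and $s\geq \lceil (1+\delta)^{k-1}/\delta\rceil+1\geq (1+\delta)^{k-1}/\delta+1$ on $s$ (the latter because $s$ is a positive integer strictly exceeding the ceiling). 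Multiplying the second bound by $(1+\delta)$ gives
\[
(1+\delta)s\ \geq\ (1+\delta)^k/\delta + (1+\delta)\ \geq\ (1+\delta)^k/\delta + 1\ \geq\ \lceil (1+\delta)^k/\delta\rceil\ =\ s+\Delta,
\]
which is the claim.

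I don't anticipate a real obstacle here; the only subtle point is ensuring that when $s$ lies just past the transition $\lceil 1/\delta\rceil$ into the geometric part (i.e.\ $k=1$), the same argument still applies. This is handled by the bound $s\geq \lceil 1/\delta\rceil+1\geq 1/\delta+1$, which is exactly the $k=1$ instance of the general bound used above, so no separate treatment is needed.
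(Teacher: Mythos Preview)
Your proof is correct and follows essentially the same approach as the paper's: split into the dense part (where $\Delta=0$) and the geometric part, then use that consecutive elements $\lceil(1+\delta)^{k-1}/\delta\rceil$ and $\lceil(1+\delta)^k/\delta\rceil$ differ by at most a factor $1+\delta$. The only cosmetic differences are that the paper bounds $\Delta\leq\delta s$ directly (via $\Delta\leq(1+\delta)^{k}/\delta-(1+\delta)^{k-1}/\delta=(1+\delta)^{k-1}\leq\delta s$) whereas you bound $(1+\delta)s\geq s+\Delta$, and the indexing is shifted by one.
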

\begin{proof}
	Suppose $s > 1 / \delta$ since otherwise the claim is immediate. Then, we have $\lceil (1+\delta)^k / \delta  \rceil  +1 \leq s \leq \lceil (1+\delta)^{k+1} / \delta \rceil$ for some integer $k \geq 0$. Then $\Delta \leq \lceil (1+\delta)^{k+1} / \delta \rceil - \lceil (1+\delta)^k / \delta \rceil - 1 \leq (1+\delta)^{k+1}/\delta - (1+\delta)^k/\delta = (1+\delta)^k \leq \delta s$.
\end{proof}

\begin{lemma}
	\label{lem:db}
	For all $j$, $\bar{s_j} \leq  (1+ \eps) s_j$.
\end{lemma}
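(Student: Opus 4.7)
The plan is to prove by induction on $j$ the stronger bound $\bar s_j \leq (1+\delta)^j s^*_j$, and then conclude the lemma from the scalar estimate $(1+\delta)^n = (1+\eps/(2n))^n \leq e^{\eps/2} \leq 1+\eps$. The first inequality is $(1+x/n)^n \leq e^x$ applied with $x = \eps/2$; the second holds for $\eps \leq 1/2$ because the function $\eps \mapsto 1+\eps - e^{\eps/2}$ vanishes at $0$ and has derivative $1 - e^{\eps/2}/2 \geq 0$ on $[0, 2\ln 2]$. Since $j \leq n$ (where $n$ is the number of jobs assigned to machine $i$), this gives $\bar s_j \leq (1+\eps) s^*_j$.

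For the base case $j = 1$, we have $t_1 := s^*_1$, and Claim~\ref{claim:db} applied to $s = s^*_1$ directly yields $\bar s_1 = t_1 + \Delta_1 \leq (1+\delta) s^*_1$; in the degenerate case $s^*_1 = 0$ we have $0 \in \bar S$, so $\Delta_1 = 0$ and the bound is trivial.

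For the inductive step, define $t_j := s^*_j + \sum_{k=1}^{j-1} \Delta_k$, so that by the definition of $\Delta_j$ we have $\bar s_j = t_j + \Delta_j$ with $\Delta_j$ the smallest non-negative integer making $\bar s_j \in \bar S$. Applying Claim~\ref{claim:db} with $s = t_j$ (or handling $t_j = 0$ trivially as above) gives $\bar s_j \leq (1+\delta) t_j$. Now rewrite $t_j = s^*_j + (\bar s_{j-1} - s^*_{j-1})$, invoke the inductive hypothesis $\bar s_{j-1} \leq (1+\delta)^{j-1} s^*_{j-1}$, and use the monotonicity $s^*_{j-1} \leq s^*_j$ (which holds because the jobs on machine $i$ were renumbered in increasing order of their start times in the optimum schedule) to conclude
\[
\bar s_j \leq (1+\delta)\bigl(s^*_j + ((1+\delta)^{j-1}-1)s^*_{j-1}\bigr) \leq (1+\delta)(1+\delta)^{j-1} s^*_j = (1+\delta)^j s^*_j.
\]

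There is no real obstacle here: the lemma is a clean geometric unfolding of Claim~\ref{claim:db}, applied once per job, with errors compounding multiplicatively by a factor of $1+\delta$. The only items that require a sentence of care are the degenerate case $t_j = 0$ and the verification of the closed-form scalar bound $(1+\eps/(2n))^n \leq 1+\eps$ for $\eps \leq 1/2$, both of which are straightforward.
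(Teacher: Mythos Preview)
Your proof is correct and follows essentially the same route as the paper: both prove the stronger inductive bound $\bar s_j \leq (1+\delta)^j s^*_j$ by applying Claim~\ref{claim:db} to $s = s^*_j + \sum_{k<j}\Delta_k$ and using the monotonicity $s^*_{j-1}\leq s^*_j$, then finish with an elementary scalar estimate (the paper uses $(1+\delta)^j \leq e^{\delta j} \leq 1+2\delta j \leq 1+\eps$, you use $(1+\delta)^n \leq e^{\eps/2} \leq 1+\eps$). Your handling of the degenerate case $t_j=0$ is a small point of extra care that the paper omits; one minor slip is that $n$ in $\delta = \eps/(2n)$ is the total number of jobs, not the number on machine $i$, but since $j \leq n' \leq n$ this does not affect the argument.
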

\begin{proof}
	We show by induction on $j$ that $\bar{s_j} \leq (1 + \delta)^j s_j$. This claim is immediate when $j = 1$ by Claim~\ref{claim:db}. Suppose the claim holds for all jobs $1, 2, ..., j-1$. From definition of $\bar{s}_{j-1}$, we have $\Delta_1 + \Delta_2 + ... \Delta_{j-1} \leq ((1+\delta)^{j-1} -1) s_{j-1} \leq ((1+\delta)^{j-1} -1) s_j$. Hence we have $s_j + \Delta_1 + \Delta_2 + ... \Delta_{j-1} \leq (1+\delta)^{j-1}s_j$. By applying Claim~\ref{claim:db} with $s = s_j + \Delta_1 + \Delta_2 + ... \Delta_{j-1}$ and $\Delta = \Delta_j$, we have $\bar{s_j} \leq (1 + \delta)^j s_j$. Then, for all $j$, we have $\bar{s_j} \leq (1 + \delta)^j s_j \leq \exp(\delta j) s_j \leq  (1 + 2 \delta j) s_j \leq (1+\eps) s_j$ where the second inequality follows since $\exp(z) \leq 1+ 2z$ for $z \leq 1/ 2$. 	
\end{proof}

\smallskip
We now shift our focus to proving (ii). Fix any feasible solution to the new LP. Fix a machine $i$. For the sake of contradiction, suppose Constraint (\ref{LPC:cover-by-one-interval}) is violated for some time;  let $t^*$ be the earliest such time step. Note that $t^*-1 \notin \bar{S}$ since we kept Constraint (\ref{LPC:cover-by-one-interval}) for all times in $\bar{S}$. Let $s$ be the latest time step before $t^*$ in $\bar{S}$. Notice that $s$ must exist since $0 \in \bar{S}$. Since $t^* \notin \bar{S}$, we know that no jobs start during $(s, t^*]$. What this means is that every job processed at time $t^*$ is also processed at time $t'=s+1$ at the same rate. This is a contradiction since  the total height of jobs processed at time $t'$ is at most one due to Constraint (\ref{LPC:cover-by-one-interval}) for time $t'$.

\section{Details on Calculating $\alpha$ in Section~\ref{sec:non-preemptive}}
	\label{sec:alpha-calc}
	
	In this section, we discuss how we compute $\alpha$ in detail. Although we found the distribution via programming, we can verify the approximation guarantee purely analytically.  Let $a = 0.1702, b = 0.5768, c = 0.8746$ and $d = 0.85897$.  So $f(\theta) = a\theta^2 + b\theta + c$ if $\theta \in [0, d]$ and $f(\theta) = 0$ if $\theta \in (d, 1]$. 
 For this $f$, we have
	\begin{flalign*}
		&& F(1)&=\int_0^1 f(\theta) \sfd \theta = \frac{ad^3}{3} + \frac{bd^2}{2} + cd  \approx 1.00000125 > 1, &&\\
		&& \beta &= \int_0^1 f(\theta) \theta \sfd \theta =  \frac{ad^4}{4} + \frac{bd^3}{3} + \frac{cd^2}{2} < 0.46767.\\
		\text{We define} && \rho(\phi)&:=\left(F(\phi)-\left(1-\frac1e\right)\int_0^\phi F(\theta)\sfd\theta \right)\frac1\phi.
	\end{flalign*}
	$F(\phi)-(1-1/e)\int_0^\phi F(\theta)\sfd \theta$ decreases as  $\phi$ goes from $d$ to $1$ since $F(\phi)$ remains a constant.  Thus, when $\phi$ goes from $d$ to $1$, $\rho(\phi)$ will decrease as long as $\rho(\phi)$ remains positive.  So, to compute $\rho = \sup_{\phi \in [0, 1]} \rho(\phi)$, it suffices to consider $\phi \in [0, d]$.  For $\phi \in [0, d]$, we have 
	\begin{align*}
		\rho(\phi) &= \left(\frac{a\phi^3}{3} + \frac{b\phi^2}{2}+c\phi - \left(1-\frac1e\right)\left(\frac{a\phi^4}{12}+\frac{b\phi^3}{6}+\frac{c\phi^2}{2}\right)\right)\frac1\phi \\
		&=\frac{a\phi^2}{3} + \frac{b\phi}{2}+c - \left(1-\frac1e\right)\left(\frac{a\phi^3}{12}+\frac{b\phi^2}{6}+\frac{c\phi}{2}\right), \\
		\frac{\sfd \rho(\phi)}{\sfd \phi} &= \frac{2a\phi}{3} + \frac{b}{2} - \left(1-\frac1e\right)\left(\frac{a\phi^2}{4}+\frac{b\phi}{3}+\frac{c}{2}\right) = A_2 \phi^2 + A_1 \phi + A_0,
	\end{align*}
	where $A_2 = -(1-1/e)a/4, A_1 = 2a/3-(1-1/e)b/3$ and $A_0 = b/2-(1-1/e)c/2$.
	Since $A_2$ is negative, we only need to consider $\rho(\phi)$ for $\phi = 0$, $\phi = d$ and $\phi = \phi^* = \frac{A_1+\sqrt{A_1^2-4A_0A_2}}{-2A_2} \approx 0.5338653$.
	\begin{align*}
		\rho(0) &= c=0.8746, \quad \rho(d) = \frac{ad^2}{3} + \frac{bd}{2}+c - \left(1-\frac1e\right)\left(\frac{ad^3}{12}+\frac{bd^2}{6}+\frac{cd}{2}\right) < 0.8763, \\
		\rho(\phi^*) &= \frac{a{\phi^*}^2}{3} + \frac{b{\phi^*}}{2}+c - \left(1-\frac1e\right)\left(\frac{a{\phi^*}^3}{12}+\frac{b{\phi^*}^2}{6}+\frac{c{\phi^*}}{2}\right) \approx 0.8784782 < 0.8785.
	\end{align*}
	
	Thus $\rho=\sup_{\phi\in(0, 1]}\rho(\phi) <  0.8785$. Thus $\alpha = 1+\max\set{\rho, (1+\rho)\beta} < 1.8786$.

\bibliographystyle{plain}
\bibliography{unrelated}

\end{document}